\documentclass[letterpaper,twocolumn,10pt,journal]{IEEEtran}



\RequirePackage{fix-cm}

\usepackage{balance}

\usepackage{amssymb,amsmath,amsthm}
\usepackage{multirow}%
\usepackage{enumerate}%
\usepackage{comment}%
\usepackage{url}%
\usepackage{graphicx}%
\graphicspath{{./figures/}}%
\usepackage{paralist}%
\usepackage[active]{srcltx}%
\usepackage{color}%
\usepackage{algorithm}%
\usepackage{algpseudocode}%

\usepackage{hyperref}
 \newtheorem{theorem}{Theorem}
 
 \newtheorem{lemma}{Lemma}
 \newtheorem{definition}{Definition}

\newtheorem{rem}{Remark}

\algrenewcommand\algorithmicrequire{\textbf{\quad Input:}}
\algrenewcommand\algorithmicensure{\textbf{\quad Output:}}
\newenvironment{proof sketch}[1]{\noindent {\emph{Proof sketch of #1:}}}{\hfill \qed}


\newcommand{\eps}{\varepsilon}

\newcommand{\know}{\text{\textit{know}}}
\newcommand{\true}{\text{\textbf{true}}}
\newcommand{\false}{\text{\textbf{false}}}

\newcommand{\byz}[1]{\textsf{Byz}$\left(#1\right)$}
\newcommand{\om}[1]{\textsf{Om}$\left(#1\right)$}


\newcommand{\NN}{{\mathbb{N}}}

\newcommand{\Prr}[2][]{\Pr_{#1}{\left[#2\right]}}

\usepackage{color}
\usepackage{soul}

\begin{document}

\title{Robust Routing Made Easy%
\thanks{This article extends work presented at SSS
2017~\cite{DBLP:conf/sss/LenzenM17}.}}


\author{Christoph Lenzen$^1$ \quad Moti Medina$^2$ \quad Mehrdad Saberi$^3$ \quad Stefan Schmid$^4$\\
{\small
$^1$Max Planck Institute for
Informatics, Germany \quad $^2$Ben-Gurion University of the Negev,
Israel}\\ 
{\small$^3$Sharif University of Technology,
Iran \quad $^4$Faculty of Computer Science, University of Vienna, Austria}
}



\maketitle

\begin{abstract}
With the increasing scale of communication networks,
the likelihood of failures grows as well.
Since these networks form a critical backbone
of our digital society, it is important that they rely on
robust routing algorithms which ensure connectivity
despite such failures. While most modern communication 
networks feature robust routing mechanisms, these mechanisms
are often fairly complex to design and verify, as they 
need to account for the effects of failures and rerouting
on communication.

This paper revisits the design of robust routing mechanisms,
with the aim to avoid such complexity. In particular, 
we showcase \emph{simple} and generic blackbox transformations that increase resilience of routing against independently distributed failures, which allows
to simulate the routing scheme on the original network, even in the presence
of non-benign node failures (henceforth called faults). This is attractive
as the system specification and routing policy can simply be preserved.

We present a scheme for constructing such a reinforced network, given 
an existing (synchronous) network and a routing scheme. We prove that
this algorithm comes with small constant overheads, and only requires a minimal 
amount of additional node and edge resources. At the same time, 
it allows to tolerate a large number of 
independent random (node) faults,
asymptotically almost surely. 
We complement our analytical results with simulations on different real-world topologies.
\end{abstract}


\section{Introduction}

Communication networks have become a critical backbone
of our digital society. For example, many datacentric applications
related to entertainment, social networking, or health, among others,
are distributed and rely on the high availability and
dependability of the interconnecting network (e.g., a 
datacenter network or a wide-area network).
At the same time, with the increasing scale of 
today's distributed and networked systems (often relying
on commodity hardware as a design choice
\cite{al2008scalable,barroso2003web,ghemawat2003google}), the number of
failures is likely to increase as well
\cite{gill2011understanding,failures-uninett,link-failures-ip-backbone,single-failure,single-failure-journal}.
It is hence important that communication networks can tolerate
such failures and 
remain operational despite the failure of some of their
components.

Robust routing mechanisms aim to provide such guarantees:
by rerouting traffic quickly upon failures,
reachability is preserved. Most communication
networks readily feature robust routing mechanisms,
in the control plane (e.g.
\cite{francois2005achieving,gafni-lr,greenberg2005clean,oran1990rfc1142}), in
the data plane (e.g. \cite{ipfrr,conext18,ddc,mplsfrr}), as well as on higher
layers (e.g. \cite{andersen2001resilient}).
However, the design of such robust routing mechanisms is
still challenging and comes with tradeoffs, especially if
resilience should extend to multiple failures \cite{chiesa2016resiliency}. 

Besides a fast reaction time and re-establishing connectivity, the 
resulting routes typically need to fulfill certain additional properties,
related to the network specification and policy. 
Ensuring such properties however can be fairly complex, 
as packets inevitably follow different paths after failures.
Interestingly, while the problem of how to re-establish reachability 
after failures is well explored,
the problem of providing specific properties on the failover
paths is much less understood. 

This paper revisits the design of robust routing mechanisms
and presents a new approach to robust routing which conceptually differs 
significantly from existing literaure.
In particular, our approach aims to overcome the complexities involved in designing
robust routing algorithms, by simply sticking to the original
network and routing specification. 
To achieve this, our approach is to mask the effects of failures 
using \emph{redundancy}: in the spirit of error correction, 
we proactively reinforce networks by adding a minimal number of
additional nodes and links, rather than 
coping with failed components when they occur.
The latter is crucial
for practicability: significant refactoring of existing systems
and/or accommodating substantial design constraints is rarely
affordable.

In this paper, to ensure robustness while maintaining
the network and routing specification, we aim to 
provide a high degree of fault-tolerance,
which goes beyond simple equipment and failstop failures, 
but accounts for more general \emph{faults} which include non-benign
failures of entire nodes.

While our approach presented in this paper will be general
and applies to \emph{any} network topology, we are particularly
interested in datacenter networks (e.g., based on low-dimensional
hypercubes or $d$-dimensional tori \cite{guo2009bcube,leighton2014introduction}) 
as well as in wide-area
networks (which are typically sparse \cite{spring2002measuring}). 
We will show that our approach works especially well for these networks.

\subsection{The Challenge}

More specifically,
we are given a network $G=(V,E)$ and a routing scheme, i.e.,
a set of routes in $G$.
We seek to reinforce the network $G$ by
allocating additional resources, in terms of nodes and edges, 
and to provide a corresponding routing strategy to simulate the routing scheme 
on the original network despite non-benign node failures.

The main goal is to maximize the probability that the network withstands 
failures (in particular, random failures of entire nodes),
while minimizing the resource overhead.
 Furthermore, we want to ensure that the network transformation is simple 
to implement, and that it interferes as little as possible with the existing system design and operation, e.g., it 
does not change the reinforced system's specification.

Toward this goal, in this paper, we make a number of simplifying assumptions.
First and most notably, we assume independent failures,
that is, we aim at masking faults with little or no correlation among each other.
Theoretically, this is motivated by the fact that 
guaranteeing full functionality despite having $f$ adversarially placed faults trivially requires redundancy (e.g., node degrees) larger than $f$.
There is also practical motivation to consider independent faults: 
many distributed systems proactively avoid fault clusters
\cite{feldmann2016netco,scheideler2005spread} and there is also empirical
evidence that in certain scenarios, failures are only weakly correlated \cite{lee2010diverse}.

Second, we treat nodes and their outgoing links as fault-containment regions (according to \cite{kopetz03}), i.e., they are the basic components our systems are comprised of.
This choice is made for the sake of concreteness;
similar results could be obtained when considering, e.g., edge failures, without changing the gist of results or techniques.
With these considerations in mind, the probability of uniformly random 
node failures that the reinforced system can tolerate is a canonical choice for measuring resilience.

Third, we focus on synchronous networks, for 
several reasons:
synchrony not only helps in handling faults, both on the theoretical level (as illustrated by the famous FLP theorem~\cite{fischer85impossibility}) and for ensuring correct implementation, but it also 
simplifies presentation, making it easier to focus on the proposed concepts.
In this sense, we believe 
that our approach is of particular interest in the context of real-time systems, 
where the requirement of meeting hard deadlines makes synchrony an especially attractive choice.

\subsection{Contributions and Techniques}

This paper proposes a novel and simple approach to robust routing,
which decouples the task of designing a reinforced network from the task of
designing a routing scheme over the input network. By virtue of this decoupling,
our approach supports arbitrary routing schemes and objectives, 
from load minimization to throughput maximization and beyond, 
in various models of computation, e.g., centralized or distributed, randomized
or deterministic, online or offline, or oblivious.

We first consider a trivial approach: 
we simply replace each node by $\ell \in \NN$ copies 
and for each edge we connect each pair of copies of its endpoints, 
where $\ell$ is a constant.\footnote{Choosing concreteness over generality, 
we focus on the, in our view, most interesting case of constant $\ell$. It is straightforward to generalize the analysis.}
Whenever a message would be sent over an edge in the original graph, 
it should be sent over each copy of the edge in the reinforced graph.
If not too many copies of a given node fail, this enables each receiving copy to recover the correct message.
Thus, each non-faulty copy of a node can run the routing algorithm as if it were the original node, guaranteeing that it has the same view of the system state as its original in the corresponding fault-free execution of the routing scheme on the original graph.

When analyzing this approach, 
we observe that asymptotically almost surely (a.a.s., with probability $1-o(1)$) and with $\ell=2f+1$, this reinforcement can sustain an independent probability $p$ of $f$ Byzantine node failures\footnote{Byzantine~\cite{pease80} nodes may violate the protocol in any way, including collusion.} for any $p\in o(n^{-1/(f+1)})$.
This threshold is sharp up to (small) constant factors: for $p\in \omega(n^{-1/(f+1)})$, a.a.s.\ there is some node for which all of its copies  fail.
If we restrict the fault model to omission faults 
(faulty nodes may skip sending some messages), $\ell=f+1$ suffices.
The cost of this reinforcement is that the number of nodes and edges increase by factors of $\ell$ and $\ell^2$, respectively.
Therefore, already this simplistic solution can support non-crash faults of probability $p\in o(1/\sqrt{n})$ at a factor-$4$ overhead.

We note that the simulation introduces no large computational overhead and 
does not change the way the system works, enabling to use it as a blackbox.
Also randomized algorithms can be simulated in a similar fashion, 
provided that all copies of a node have access to a shared source of randomness.
Note that this requirement is much weaker than globally shared randomness: 
it makes sense to place the copies of a node in physical proximity to approximately preserve the geometrical layout of the physical realization of the network topology.

Our approach above raises the question whether 
we can reduce the involved overhead further.
In this paper, we will answer this question positively:
We propose to apply the above strategy only to a small 
subset $E'$ of the edge set.
Denoting by $v_1,\ldots,v_{\ell}$ the copies of node $v\in V$, for 
any remaining edge $\{v,w\}\in E\setminus E'$ we add only edges 
$\{v_i,w_i\}$, $i\in [\ell]$, to the reinforced graph.
The idea is to choose $E'$ in a way such that the connected components 
induced by $E\setminus E'$ are of constant size, yet $|E'|=\varepsilon |E|$.
This results in the same asymptotic threshold for $p$, while the number of edges of the reinforced graph drops to $((1-\varepsilon)\ell+\varepsilon \ell^2)|E|$.
For any constant choice of $\varepsilon$, we give constructions with this property for grids or tori of constant dimension and minor-free graphs of bounded degree.
Again, we consider the case of $f=1$ of particular interest:
in many typical network topologies, we can reinforce the network to boost the failure probability that can be tolerated from $\Theta(1/n)$ to $\Omega(1/\sqrt{n})$ by roughly doubling (omission faults) or tripling (Byzantine faults) the number of nodes and edges.

The redundancy in this second construction is near-optimal under the constraint that we want to simulate an arbitrary routing scheme in a blackbox fashion,
as it entails that we need a surviving copy of each edge, and thus in particular each node.
In many cases, the paid price will be smaller than the price for making each individual component sufficiently reliable to avoid this overhead.
Furthermore, we will argue that the simplicity of our constructions enables us to re-purpose the redundant resources in applications with less strict reliability requirements.

Our results show that while approach is general and can be applied to any
existing network topology (we will describe and analyze valid reinforcements for
our faults models on general graphs), it can be refined and is particularly 
interesting in the context of networks that
admit suitable partitionings. Such networks include
sparse, minor-free graphs, which are practically relevant topologies in
wide-area networks, as well as torus graphs and low-dimensional
hypercubes, which arise in datacenters and parallel architectures. 

To complement our theoretical findings and investigate the reinforcement
cost in real networks, we conducted experiments on the Internet Topology Zoo~\cite{knight2011internet}.
We find that our approach achieves robustness at significantly lower cost compared to
the naive replication strategy often employed in dependable networks.

\subsection{Organization}

In \S~\ref{sec:applications}, we sketch the properties of our approach and state a number of potential applications. In \S~\ref{sec:prelim}, we formalize the fault models that we tackle in this article alongside the notion of a valid reinforcement and its complexity measures. In \S~\ref{sec:strongbyz} and \S~\ref{sec:strong_om}, we study valid reinforcements on general graphs, and in \S~\ref{sec:eff}, we study more efficient reinforcements for specific graphs. 
We complement our analytical results with an empirical simulation study in 
\S~\ref{sec:eval}.
In \S~\ref{sec:disc} we raise a number of points in favor of the reinforcement approach. We review related work in
\S~\ref{sec:relwork}, and we conclude and present a number of interesting 
follow-up questions in \S~\ref{sec:conc}.

\section{High-level Overview: Reinforcing Networks}\label{sec:applications}

Let us first give an informal overview of our blackbox transformation 
for reinforcing networks (for formal specification see \S~\ref{sec:prelim}), as well as its guarantees and preconditions.

\paragraph{Assumptions on the Input Network}
We have two main assumptions on the network at hand: (1)~We consider synchronous routing networks, and (2)~each node in the network (alongside its outgoing links) is a fault-containment region, i.e., it fails independently from other nodes.
We do not make any assumptions on the network topology, but will provide specific
optimizations for practically relevant topologies (such as sparse, minor-free networks
or hypercubes) in \S~\ref{sec:eff}.

\paragraph{Valid Reinforcement Simulation Guarantees}
Our reinforcements create a number of copies of each node. We have each non-faulty copy of a node run the routing algorithm as if it were the original node, guaranteeing that it has the same view of the system state as its original in the corresponding fault-free execution of the routing scheme on the original graph. Moreover, the simulation fully preserves all guarantees of the schedule, including its timing, and introduces no big computational overhead.
This assumption is simple to meet in stateless networks, while it requires synchronization primitives in case of stateful network functions.

\paragraph{Unaffected Complexity and Cost Measures}
Routing schemes usually revolve around objective functions such as load minimization, maximizing the throughput, minimizing the latency, etc., while aiming to minimize complexity related to, e.g., the running time for centralized algorithms, the number of rounds for distributed algorithms, the message size, etc. Moreover, there is the degree of uncertainty that can be sustained, e.g., whether the input to the algorithm is fully available at the beginning of the computation (offline computation) or revealed over time (online computation). Our reinforcements preserve all of these properties, as they operate in a blackbox fashion. For example, our machinery readily yields various fault-tolerant packet routing algorithms in the Synchronous Store-and-Forward model by Aiello et.\ al~\cite{AKOR}. More specifically, from~\cite{spaaEvenMP15} we obtain a centralized deterministic online algorithm on unidirectional grids of constant dimension that achieves a competitive ratio which is polylogarithmic in the number of nodes of the input network w.r.t.\ throughput maximization. Using~\cite{EvenMR16} instead, we get a centralized randomized offline algorithm on the unidirectional line with constant approximation ratio w.r.t.\ throughput maximization. In the case that deadlines need to be met the approximation ratio is, roughly, $O(\log^* n)$~\cite{RackeR11}. As a final example, one can obtain from~\cite{AKK} various online distributed algorithms with sublinear competitive ratios w.r.t.\ throughput maximization.

\paragraph{Cost and Gains of the Reinforcement}
The price of adding fault-tolerance is given by the increase in the network size, i.e., the number of nodes and edges of the reinforced network in comparison to the original one. Due to the assumed independence of node failures, it is straightforward to see that the (uniform) probability of sustainable node faults increases roughly like $n^{-1/(f+1)}$ in return for (i) a linear-in-$f$ increase in the number of nodes and (ii) an increase in the number of edges that is quadratic in $f$. We then proceed to improve the construction for grids and minor-free constant-degree graphs to reduce the increase in the number of edges to being roughly linear in $f$. Based on this information, one can then assess the effort in terms of these additional resources that is beneficial, as less reliable nodes in turn are cheaper to build, maintain, and operate. We also note that, due to the ability of the reinforced network to ensure ongoing unrestricted operability in the presence of some faulty nodes, faulty nodes can be replaced or repaired \emph{before} communication is impaired or breaks down.

\paragraph{Preprocessing}
Preprocessing is used, e.g., in computing routing tables in Oblivious Routing~\cite{racke2009survey,esa19}.
The reinforcement simply uses the output of such a preprocessing stage in the same manner as the original algorithm. In other words, the preprocessing is done on the input network and its output determines the input routing scheme. In particular, the preprocessing may be randomized and does not need to be modified in any way.

\paragraph{Randomization}
Randomized routing algorithms can be simulated as well, provided that all copies of a node have access to a shared source of randomness. We remark that, as our scheme locally duplicates the network topology, it is natural to preserve the physical realization of the network topology in the sense that all (non-faulty) copies of a node are placed in physical proximity. This implies that this constraint is much easier to satisfy than globally shared randomness.

\section{Preliminaries}\label{sec:prelim}
We consider synchronous routing networks.
Formally, the network is modeled as a directed graph $G=(V,E)$, where $V$ is the set of $n\triangleq |V|$ vertices, and $E$ is the set of $m\triangleq |E|$ edges (or links).
Each node maintains a state, based on which it decides in each round for each of its outgoing links which message to transmit.
We are not concerned with the inner workings of the node, i.e., how the state is updated;
rather, we assume that we are given a scheduling algorithm performing the task of updating this state and use it in our blackbox transformations.
In particular, we allow for online, distributed, and randomized algorithms.

\paragraph{Probability-$p$ Byzantine Faults \byz{p}}
The set of faulty nodes $F\subseteq V$ is determined by sampling each $v\in V$ into $F$ with independent probability $p$. Nodes in $F$ may deviate from the protocol in arbitrary ways, including delaying, dropping, or forging messages, etc.
\paragraph{Probability-$p$ Omission Faults \om{p}}
The set of faulty nodes $F\subseteq V$ is determined by sampling each $v\in V$ into $F$ with independent probability $p$. Nodes in $F$ may deviate from the protocol by not sending a message over an outgoing link when they should. We note that it is sufficient for this fault model to be satisfied \emph{logically.} That is, as long as a correct node can identify incorrect messages, it may simply drop them, resulting in the same behavior of the system at all correct nodes as if the message was never sent.

\paragraph{Simulations and Reinforcement}
For a given network $G=(V,E)$ and a scheduling algorithm $A$, we will seek to \emph{reinforce} $(G,A)$ by constructing $G'=(V',E')$ and scheduling algorithm $A'$ such that the original algorithm $A$ is \emph{simulated} by $A'$ on $G'$, where $G'$ is subject to random node failures. We now formalize these notions. First, we require that there is a surjective mapping $P:V'\to V$; fix $G'$ and $P$, and choose $F'\subseteq V'$ randomly as specified above.
\begin{definition}[Simulation under \byz{p}]
Assume that in each round $r\in \NN$, each $v'\in V'\setminus F'$ is given the same input by the environment as $P(v')$. $A'$ is a \emph{simulation} of $A$ under \byz{p}, if for each $v\in V$, a strict majority of the nodes $v'\in V'$ with $P(v')=v$ computes in each round $r\in \NN$ the state of $v$ in $A$ in this round. The simulation is \emph{strong}, if not only for each $v\in V$ there is a strict majority doing so, but all $v'\in V'\setminus F'$ compute the state of $P(v')$ in each round.
\end{definition}
\begin{definition}[Simulation under \om{p}]
Assume that in each round $r\in \NN$, each $v'\in V'$ is given the same input by the environment as $P(v')$. $A'$ is a \emph{simulation} of $A$ under \om{p}, if for each $v\in V$, there is $v'\in V'$ with $P(v')=v$ that computes in each round $r\in \NN$ the state of $v$ in $A$ in this round. The simulation is \emph{strong}, if each $v'\in V'$ computes the state of $P(v')$ in each round.
\end{definition}
\begin{definition}[Reinforcement]
A \emph{(strong) reinforcement} of a graph $G=(V,E)$ is a graph $G'=(V',E')$, a surjective mapping $P\colon V'\to V$, and a way of determining a scheduling algorithm $A'$ for $G'$ out of scheduling algorithm $A$ for $G$. The reinforcement is \emph{valid} under the given fault model (\byz{p} or \om{p}) if $A'$ is a (strong) simulation of $A$ a.a.s.
\end{definition}

\paragraph*{Resources and Performance Measures.}
We use the following performance measures.
\begin{enumerate}[(i)]
\item The probability $p$ of independent node failures that can be sustained a.a.s.
\item The ratio $\nu\triangleq |V'|/|V|$, i.e., the relative increase in the number of nodes.
\item The ratio $\eta \triangleq|E'|/|E|$, i.e., the relative increase in the number of edges. 
\end{enumerate}
We now briefly discuss, from a practical point of view, why we do not explicitly consider further metrics that are of interest.

\subsection*{Other Performance Measures}
\begin{compactitem}
\item \emph{Latency:}
As our reinforcements require (time-preserving) simulation relations, in terms of rounds, there is no increase in latency whatsoever.
However, we note that (i) we require all copies of a node to have access to the input (i.e., routing requests) of the simulated node and (ii) our simulations require to map received messages in $G'$ to received messages of the simulated node in $G$.
Regarding (i), recall that it is beneficial to place all copies of a node in physical vicinity, implying that the induced additional latency is small.
Moreover, our constructions naturally lend themselves to support redundancy in computations as well, by having each copy of a node perform the tasks of its original;
in this case, (i) comes for free.
Concerning (ii), we remark that the respective operations are extremely simple;
implementing them directly in hardware is straightforward and will have limited impact on latency in most systems.
\item \emph{Bandwidth/link capacities.}
We consider the uniform setting in this work.
Taking into account how our simulations operate, one may use the ratio $\eta$ as a proxy for this value.
\item \emph{Energy consumption.}
Regarding the energy consumption of links, the same applies as for bandwidth.
The energy nodes use for routing computations is the same as in the original system, except for the overhead induced by Point (ii) we discussed for latency.
Neglecting the latter, the energy overhead is in the range $[\min\{\nu,\eta\},\max\{\nu,\eta\}]$.
\item \emph{Hardware cost.}
Again, neglecting the computational overhead of the simulation, the relative overhead lies in the range $[\min\{\nu,\eta\},\max\{\nu,\eta\}]$
\end{compactitem}
In light of these considerations, we focus on $p$, $\nu$, and $\eta$ as key metrics for evaluating the performance of our reinforcement strategies.

\section{Strong Reinforcement under \texorpdfstring{\byz{p}}{Byz(p)}}\label{sec:strong_byz}\label{sec:strongbyz}

We now present and analyze valid reinforcements 
under \texorpdfstring{\byz{p}}{Byz(p)} 
for our faults model 
on general graphs. 
Given are the input network $G=(V,E)$ and scheduling algorithm $A$. Fix a parameter $f\in \NN$ and set $\ell = 2f+1$.

\paragraph{Reinforced Network $G'$}
We set $V'\triangleq V\times [\ell]$, where $[\ell]\triangleq \{1,\ldots,\ell\}$, and denote $v_i\triangleq (v,i)$. Accordingly, $P(v_i)\triangleq v$. We define $E'\triangleq \{(v',w')\in V'\times V'\,|\,(P(v'),P(w'))\in E\}$.

\paragraph{Strong Simulation $A'$ of $A$}
Consider node $v'\in V'\setminus F'$. We want to maintain the invariant that in each round, each such node has a copy of the state of $v=P(v')$ in $A$. To this end, $v'$
\begin{compactenum}[\bfseries (1)]
\item initializes local copies of all state variables of $v$ as in $A$,
\item \sloppy sends on each link $(v',w')\in E'$ in each round the message $v$ would send on $(P(v'),P(w'))$ when executing $A$, and
\item for each neighbor $w$ of $P(v')$ and each round $r$, updates the local copy of the state of $A$ as if $v$ received the message that has been sent to $v'$ by at least $f+1$ of the nodes $w'$ with $P(w')=w$ (each one using edge $(w',v')$).
\end{compactenum}
Naturally, the last step requires such a majority to exist; otherwise, the simulation fails. We show that $A'$ can be executed and simulates $A$ provided that for each $v\in V$, no more than $f$ of its copies are in $F'$.
\begin{lemma}\label{lemma:sim_byz}
If for each $v\in V$, $|\{v_i\in F'\}|\leq f$, then $A'$ strongly simulates $A$.
\end{lemma}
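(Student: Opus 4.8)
The plan is to establish, by induction on the round number $r$, the invariant that at the beginning of round $r$ every non-faulty copy $v'\in V'\setminus F'$ holds a local copy of the state that $v\triangleq P(v')$ has at the beginning of round $r$ in the execution of $A$ on $G$ (under the same environment inputs). Since this invariant is exactly the requirement in the definition of a strong simulation---and it trivially also implies the plain simulation condition, as at least $\ell-f=f+1>\ell/2$ copies of every node are non-faulty---proving it for all $r$ settles the lemma. If $A$ is randomized I additionally invoke the shared-randomness assumption discussed earlier, so that ``the state of $v$ in $A$'' is well defined and all copies of $v$ agree on the messages $v$ would send; in the deterministic case there is nothing to add here.

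For the base case $r=0$, step~(1) of $A'$ initializes the local state variables of each non-faulty $v'$ exactly as $A$ initializes those of $P(v')$, so the invariant holds. For the inductive step I would assume the invariant at the beginning of round $r$ and consider a fixed non-faulty $v'$ with $v=P(v')$ together with an arbitrary in-neighbor $w$ of $v$ in $G$. Among the $\ell=2f+1$ copies $w_1,\dots,w_\ell$ at most $f$ lie in $F'$, hence at least $f+1$ are non-faulty; by the induction hypothesis each such non-faulty $w_j$ holds the correct round-$r$ state of $w$, so by step~(2) it transmits on the edge $(w_j,v')$ exactly the message $m^\star$ that $w$ sends on $(w,v)$ in round $r$ of $A$. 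Therefore $v'$ receives $m^\star$ from at least $f+1$ copies of $w$, whereas any message $m\neq m^\star$ reaches $v'$ from at most the $\le f$ faulty copies of $w$, i.e.\ from fewer than $f+1$ of them. Hence $m^\star$ is the unique message received from a set of at least $f+1$ copies of $w$, so step~(3) is both executable and in fact uses precisely the message $w$ sends to $v$ in $A$.

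Since this applies to every in-neighbor $w$ of $v$, the node $v'$ performs in round $r$ exactly the state transition that $A$ prescribes for $v$: it starts from the correct state of $v$, it is handed the same environment input as $v$ (by the definition of simulation under \byz{p}), and on each incoming edge $(w,v)$ it feeds in the correct message $m^\star$. Consequently its local state at the beginning of round $r+1$ equals the round-$(r+1)$ state of $v$ in $A$, which closes the induction. The only delicate point is the counting step in the previous paragraph---that the threshold $f+1$ is met by the correct message and by no incorrect one---which is exactly what forces the choice $\ell=2f+1$: then $f+1$ honest copies already form a strict majority, while the $\le f$ faulty copies cannot fabricate a competing majority. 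Everything else is bookkeeping over the state transitions of $A$.
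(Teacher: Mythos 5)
Your proof is correct and follows essentially the same route as the paper's: induction on the round number anchored at initialization, with the observation that at least $\ell-f=f+1$ non-faulty copies of each in-neighbor send the correct message, so the threshold rule in step~(3) recovers exactly the messages of the fault-free execution. Your extra remarks (uniqueness of the message meeting the $f+1$ threshold, the shared-randomness caveat) only make explicit points the paper leaves implicit.
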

\begin{proof}
We show the claim by induction on the round number $r\in \NN$, where we consider the initialization to anchor the induction at $r=0$. For the step from $r$ to $r+1$, observe that because all $v'\in V'\setminus F'$ have a copy of the state of $P(v')$ at the end of round $r$ by the induction hypothesis, each of them can correctly determine the message $P(v')$ would send over link $(v,w)\in E$ in round $r+1$ and send it over each $(v',w')\in E$ with $P(w')=w$. Accordingly, each $v'\in V'\setminus F'$ receives 
 the message $A$ would send over $(w,v) \in E$  
 from each $w'\in V'\setminus F'$ with $P(w')=w$ (via the link $(w',v')$). By the assumption of the lemma, we have at least $\ell-f=f+1$ such nodes, implying that $v'$ updates the local copy of the state of $A$ as if it received the same messages as when executing $A$ in round $r+1$. Thus, the induction step succeeds and the proof is complete.
\end{proof}

\paragraph{Resilience of the Reinforcement}
We now examine how large the probability $p$ can be for the precondition of Lemma~\ref{lemma:sim_byz} to be satisfied a.a.s.
\begin{theorem}\label{thm:strong_byz}
If $p \in o(n^{-1/(f+1)})$, the above construction is a valid strong reinforcement for the fault model \byz{p}. If $G$ contains $\Omega(n)$ nodes with non-zero outdegree, $p\in \omega(n^{-1/(f+1)})$ implies that the reinforcement is not valid.
\end{theorem}
\begin{proof}
By Lemma~\ref{lemma:sim_byz}, $A'$ strongly simulates $A$ if for each $v\in V$, $|\{v_i\in F'\}|\leq f$. If $p \in o(n^{-1/(f+1)})$, using $\ell=2f+1$ and a union bound we see that the probability of this event is at least
\begin{align*}
&1-n\sum_{j=f+1}^{2f+1}\binom{2f+1}{j}p^j(1-p)^{2f+1-j}\\
\geq\,& 1-n \sum_{j=f+1}^{2f+1}\binom{2f+1}{j}p^{j}\\
\geq\,& 1-n \binom{2f+1}{f+1}p^{f+1}\sum_{j=0}^f p^j\\
\geq\,& 1-n (2e)^f\cdot\frac{p^{f+1}}{1-p}= 1-o(1).
\end{align*}
Here, the second to last step uses that $\binom{a}{b}\leq (ae/b)^b$ and the final step exploits that $p\in o(n^{-1/(f+1)})$.

For the second claim, assume w.l.o.g.\ $p\leq 1/3$, as increasing $p$ further certainly increases the probability of the system to fail. For any $v\in V$, the probability that $|\{v_i\in F'\}|> f$ is independent of the same event for other nodes and larger than
\begin{equation*}
\binom{2f+1}{f+1}p^{f+1}(1-p)^f\geq \left(\frac{3}{2}\right)^f p^{f+1}(1-p)^f\geq p^{f+1},
\end{equation*}
since $\binom{a}{b}\geq (a/b)^b$ and $1-p\geq 2/3$. Hence, if $G$ contains $\Omega(n)$ nodes $v$ with non-zero outdegree, $p\in \omega(n^{-1/(f+1)})$ implies that the probability that there is such a node $v$ for which $|\{v_i\in F'\}|> f$ is at least
\begin{equation*}
1-\left(1-p^{f+1}\right)^{\Omega(n)}\subseteq 1-\left(1-\omega\left(\frac{1}{n}\right)\right)^{\Omega(n)}= 1-o(1).
\end{equation*}
If there is such a node $v$, there are algorithms $A$ and inputs so that $A$ sends a message across some edge $(v,w)$ in some round. If faulty nodes do not send messages in this round, the nodes $w_i\in V'\setminus F'$ do not receive the correct message from more than $f$ nodes $v_i$ and the simulation fails. Hence, the reinforcement cannot be valid.
\end{proof}
\begin{rem}
For constant $p$, one can determine suitable values of $f\in \Theta(\log n)$ using Chernoff's bound. However, as our focus is on small (constant) overhead factors, we refrain from presenting the calculation here.
\end{rem}

\paragraph{Efficiency of the Reinforcement}
For $f\in \NN$, we have that $\nu = \ell = 2f+1$ and $\eta = \ell^2 = 4f^2 + 4f + 1$, while we can sustain $p\in o(n^{-1/(f+1)})$.
In the special case of $f=1$, we improve from $p\in o(1/n)$ for the original network to $p\in o(1/\sqrt{n})$ by tripling the number of nodes.
However, $\eta = 9$, i.e., while the number of edges also increases only by a constant, it seems too large in systems where the limiting factor is the amount of links that can be afforded.

\section{Strong Reinforcement under \texorpdfstring{\om{p}}{Om(p)}}\label{sec:strong_om}
The strong reinforcement from the previous section is, trivially, also a strong reinforcement under \om{p}. However, we can reduce the number of copies per node for the weaker fault model. Given are the input network $G=(V,E)$ and scheduling algorithm $A$. Fix a parameter $f\in \NN$ and, this time, set $\ell = f+1$.

\paragraph{Reinforced Network $G'$}
We set $V'\triangleq V\times [\ell]$ and denote $v_i\triangleq (v,i)$. Accordingly, $P(v_i)\triangleq v$. We define $E'\triangleq \{(v',w')\in V'\times V'\,|\,(P(v'),P(w'))\in E\}$.

\paragraph{Strong Simulation $A'$ of $A$}
Each node\footnote{Nodes suffering omission failures still can simulate $A$ correctly.} $v'\in V'$
\begin{compactenum}[\bfseries (1)]
\item initializes local copies of all state variables of $v$ as in $A$,
\item \sloppy sends on each link $(v',w')\in E'$ in each round the message $v$ would send on $(P(v'),P(w'))$ when executing $A$, and
\item for each neighbor $w$ of $P(v')$ and each round $r$, updates the local copy of the state of $A$ as if $v$ received the (unique) message that has been sent to $v'$ by some of the nodes $w'$ with $P(w')=w$ (each one using edge $(w',v')$).
\end{compactenum}
Naturally, the last step assumes that some such neighbor sends a message and all $w'$ with $P(w')$ send the same such message; otherwise, the simulation fails. We show that $A'$ can be executed and simulates $A$ provided that for each $v\in V$, no more than $f$ of its copies are in $F'$.
\begin{lemma}\label{lemma:sim_om}
If for each $v\in V$, $|\{v_i\in F'\}|\leq f$, $A'$ strongly simulates $A$.
\end{lemma}
\begin{proof}
Analogous to the one of Lemma~\ref{lemma:sim_byz}, with the difference that faulty nodes may only omit sending messages and thus a single correct copy per node is sufficient.
\end{proof}

\paragraph{Resilience of the Reinforcement}
We now examine how large the probability $p$ can be for the precondition of Lemma~\ref{lemma:sim_byz} to be satisfied a.a.s.
\begin{theorem}
The above construction is a valid strong reinforcement for the fault model \om{p} if $p \in o(n^{-1/(f+1)})$. If $G$ contains $\Omega(n)$ nodes with non-zero outdegree, $p\in \omega(n^{-1/(f+1)})$ implies that the reinforcement is not valid.
\end{theorem}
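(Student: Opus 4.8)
The plan is to mirror the proof of Theorem~\ref{thm:strong_byz}, exploiting that with $\ell = f+1$ copies the precondition of Corollary~\ref{coro:sim_byz} — that at most $f$ copies of each node are in $F'$ — fails at a node $v$ exactly when \emph{all} $f+1$ copies of $v$ land in $F'$, an event of probability $p^{f+1}$.

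For the positive direction, I would take a union bound over the $n$ nodes: the probability that some node has all of its $f+1$ copies in $F'$ is at most $np^{f+1}$. Since $p\in o(n^{-1/(f+1)})$ implies $np^{f+1}\in o(1)$, a.a.s.\ every node retains a surviving copy, Corollary~\ref{coro:sim_byz} applies, and $A'$ strongly simulates $A$; hence the reinforcement is valid.

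For the impossibility direction, I would argue as follows. For each of the $\Omega(n)$ nodes $v$ with non-zero outdegree, the event that all $f+1$ copies of $v$ lie in $F'$ has probability exactly $p^{f+1}$, and, since copies of distinct nodes are sampled independently, these events are mutually independent. Thus the probability that no such node is entirely wiped out is $(1-p^{f+1})^{\Omega(n)}$, which is $o(1)$ whenever $p^{f+1}\in\omega(1/n)$, i.e., whenever $p\in\omega(n^{-1/(f+1)})$. Conditioned on the existence of a node $v$ with outdegree at least one all of whose copies are in $F'$, I would invoke the same fact as in Theorem~\ref{thm:strong_byz}: there is an algorithm $A$ and an input under which $v$ transmits a message over an outgoing edge $(v,w)$ in some round; then no copy $v_i$ delivers this message, the copies of $w$ cannot update their state as $w$ does in $A$, and the simulation fails. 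Hence the reinforcement is not valid a.a.s.

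The argument is essentially routine given the earlier development; the one point requiring a little care — as in Theorem~\ref{thm:strong_byz} — is verifying that the wipe-out events in the impossibility direction are genuinely independent across the $\Omega(n)$ selected nodes (which holds because each such node contributes a disjoint block of $f+1$ copies to the independent sampling of $F'$), and that the hidden constant in $\Omega(n)$ does not spoil the conclusion $1-(1-\omega(1/n))^{\Omega(n)}=1-o(1)$.
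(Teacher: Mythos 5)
Your proposal is correct and follows essentially the same route as the paper's proof: Corollary~\ref{coro:sim_byz} plus a union bound over the $n$ wipe-out events of probability $p^{f+1}$ for the positive direction, and independence of these events across the $\Omega(n)$ nodes with non-zero outdegree (giving $1-(1-p^{f+1})^{\Omega(n)}=1-o(1)$) together with an algorithm that sends a message from such a node for the impossibility direction. Your explicit remarks on the disjointness of the copy blocks and the harmlessness of the $\Omega(n)$ constant are only spelled out more fully than in the paper, not different in substance.
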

\begin{proof}
By Lemma~\ref{lemma:sim_om}, $A'$ strongly simulates $A$ if for each $v\in V$, $|\{v_i\in F'\}|\leq f = \ell -1$. For $v\in V$,
\begin{equation*}
\Prr{\{v_i\,|\,i\in [\ell]\}\cap F'=\ell} = p^{f+1}.
\end{equation*}
By a union bound, $A'$ thus simulates $A$ with probability $1-o(1)$ if $p\in o(n^{-1/(f+1)})$.

Conversely, if there are $\Omega(n)$ nodes with non-zero outdegree and $p\in \omega(n^{-1/(f+1)})$, with probability $1-o(1)$ all copies of at least one such node $v$ are faulty. If $v$ sends a message under $A$, but all corresponding messages of copies of $v$ are not sent, the simulation fails. This shows that in this case the reinforcement is not valid.
\end{proof}
\paragraph{Efficiency of the Reinforcement}
For $f\in \NN$, we have that $\nu = \ell = f+1$ and $\eta = \ell^2 = f^2 + 2f + 1$, while we can sustain $p\in o(n^{-1/(f+1)})$.
In the special case of $f=1$, we improve from $p\in o(1/n)$ for the original network to $p\in o(1/\sqrt{n})$ by doubling the number of nodes and quadrupling the number of edges.

\section{More Efficient Reinforcement}\label{sec:eff}
In this section, we reduce the overhead in terms of edges at the expense of obtaining reinforcements that are not strong. We stress that the obtained trade-off between redundancy ($\nu$ and $\eta$) and the sustainable probability of faults $p$ is asymptotically optimal: as we require to preserve arbitrary routing schemes in a blackbox fashion, we need sufficient redundancy on the link level to directly simulate communication. From this observation, both for \om{p} and \byz{p} we can readily derive trivial lower bounds on redundancy that match the constructions below up to lower-order terms.

\subsection{A Toy Example}
Before we give the construction, we give some intuition on how we can reduce the number of required edges. Consider the following simple case. $G$ is a single path of $n$ vertices $(v_1,\ldots, v_n)$, and the schedule requires that in round $i$, a message is sent from $v_i$ to $v_{i+1}$. We would like to use a ``budget'' of only $n$ additional vertices and an additional $(1+\eps) m=(1+\eps) (n-1)$ links, assuming the fault model \om{p}. One approach is to duplicate the path and extend the routing scheme accordingly. We already used our entire budget apart from $\eps m$ links! This reinforcement is valid as long as one of the paths succeeds in delivering the message all the way.
The probability that one of the paths ``survives'' is
$1-(1-(1-p)^n)^2 \leq 1-(1-e^{-pn})^2 \leq e^{-2pn}$,
where we used that $1-x\leq e^{-x}$ for any $x\in \mathbb{R}$.
Hence, for any $p = \omega(1/n)$, the survival probability is $o(1)$. In contrast, the strong reinforcement with $\ell=2$ (i.e., $f=1$) given in \S~\ref{sec:strong_om} sustains any $p\in o(1/\sqrt{n})$ with probability $1-o(1)$; however, while it adds $n$ nodes only, it requires $3m$ additional edges.

We need to add some additional edges to avoid that the likelihood of the message reaching its destination drops too quickly. To this end, we use the remaining $\varepsilon m$ edges to ``cross'' between the two paths every $h\triangleq 2/\varepsilon$ hops (assume $h$ is an integer), cf.~Figure~\ref{fig:toy_and_grid}.  
 This splits the path into segments of $h$ nodes each. As long as, for each such segment, in one of its copies all nodes survive, the message is delivered. For a given segment, this occurs with probability $1-(1-(1-p)^h)^2\geq 1-(ph)^2$. Overall, the message is thus delivered with probability at least $(1-(ph)^2)^{n/h}\geq 1-nhp^2$.
As for any constant $\varepsilon$, $h$ is a constant, this means that the message is delivered a.a.s.\ granted that $p\in o(1/\sqrt{n})$!

\begin{rem}
The reader is cautioned to not conclude from this example that random sampling of edges will be sufficient for our purposes in more involved graphs. Since we want to handle arbitrary routing schemes, we have no control over the number of utilized routing paths. As the latter is exponential in $n$, the probability that a fixed path is not ``broken'' by $F$ would have to be exponentially small in $n$. Moreover, trying to leverage Lov\'asz Local Lemma for a deterministic result runs into the problem that there is no (reasonable) bound on the number of routing paths that pass through a single node, i.e., the relevant random variables (i.e., whether a path ``survives'') exhibit lots of dependencies.
\end{rem}

\begin{figure}[H]
      \centering
        \includegraphics[width=\columnwidth,trim=0 00bp 0 000bp,clip]{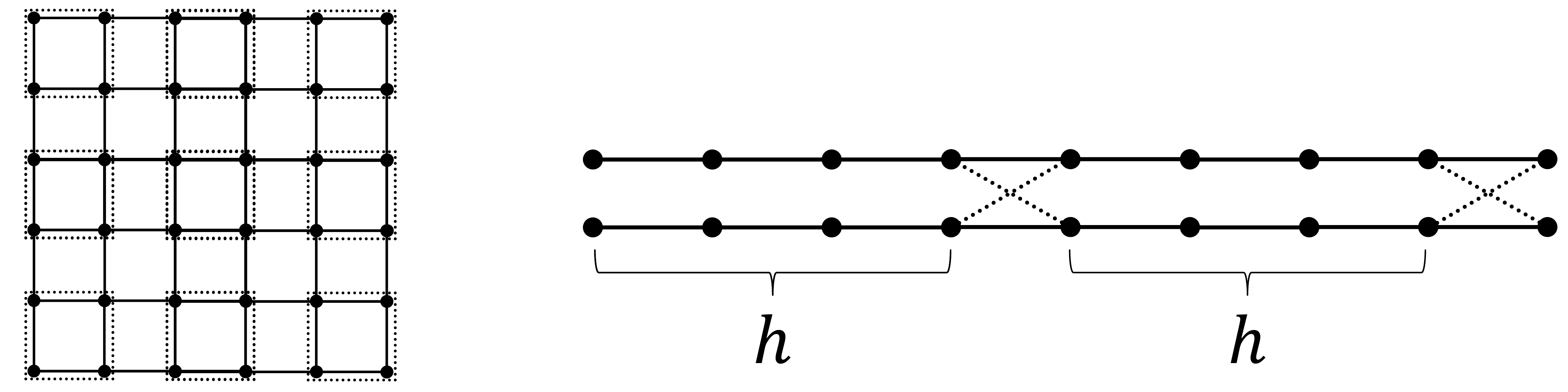}
      \caption{On the right: our toy example with $n=9$, $m=8$, $\varepsilon =1/2$, and $h=4$. The number of additional edges is $(1+\varepsilon)m$, instead of $3m$ as in the strong reinforcement construction. On the left: a $6$-ary $2$-dimensional hypercube. The subdivision of the node set into $2$-ary $2$-dimensional subcubes is illustrated by dotted lines.\label{fig:toy_and_grid}}
\end{figure}

\subsection{Partitioning the Graph}
To apply the above strategy to other graphs, we must take into account that there can be multiple intertwined routing paths. However, the key point in the above example was not that we had path segments, but rather that we partitioned the nodes into constant-size regions and added few edges inside these regions, while fully connecting the copies of nodes at the boundary of the regions.

In general, it is not possible to partition the nodes into constant-sized subsets such that only a very small fraction of the edges connects different subsets; any graph with good expansion is a counter-example. Fortunately, many network topologies used in practice are good candidates for our approach. In the following, we will discuss grid networks and minor free graphs, and show how to apply the above strategy in each of these families of graphs.

\paragraph{Grid Networks}
We can generalize the above strategy to hypercubes of dimension $d>1$.

\begin{definition}[Hypercube Networks]
A $q$-ary $d$-dimensional hypercube has node set $[q]^d$ and two nodes are adjacent if they agree on all but one index $i\in [d]$, for which $|v_i-w_i|=1$.
\end{definition}

\begin{lemma}\label{lemma:hypercube}
For any $h,d\in \NN$, assume that $h$ divides $q\in \NN$ and set $\varepsilon=1/h$. Then the $q$-ary $d$-dimensional hypercube can be partitioned into $(q/h)^d$ regions of $h^d$ nodes such that at most an $\varepsilon$-fraction of the edges connects nodes from different regions.
\end{lemma}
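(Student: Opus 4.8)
The plan is to partition $[q]^d$ by partitioning each coordinate axis $[q]$ into $q/h$ consecutive blocks of length $h$, and then taking the product of these one-dimensional partitions. Formally, for $j \in [q/h]$ let $B_j \triangleq \{(j-1)h+1, \ldots, jh\} \subseteq [q]$, so that $[q] = \bigcup_{j} B_j$ is a partition into intervals of length $h$. For a tuple $\mathbf{j} = (j_1,\ldots,j_d) \in [q/h]^d$, define the region $R_{\mathbf{j}} \triangleq B_{j_1} \times B_{j_2} \times \cdots \times B_{j_d}$. These $(q/h)^d$ sets partition $[q]^d$ and each has exactly $|B_{j_1}|\cdots|B_{j_d}| = h^d$ nodes, which is the first claim. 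It remains to bound the fraction of edges that leave a region, i.e.\ edges $\{v,w\}$ with $v \in R_{\mathbf{j}}$, $w \in R_{\mathbf{j}'}$, $\mathbf{j} \neq \mathbf{j}'$.

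The key observation is that an edge of the hypercube changes exactly one coordinate, say coordinate $i$, by $\pm 1$. Such an edge crosses between two regions if and only if its two endpoints lie in different blocks $B_j, B_{j'}$ along axis $i$ (the other coordinates agree, hence stay in the same block). Since the blocks along axis $i$ are intervals of length $h$, a unit step in coordinate $i$ from value $a$ to $a+1$ crosses a block boundary precisely when $a \equiv 0 \pmod h$ and $a \neq 0$ — there are $q/h - 1$ such values of $a$ out of the $q-1$ possible positions for an $i$-edge along that axis. So I would count: the number of $i$-edges in total is $(q-1)q^{d-1}$ (choose the two adjacent values in coordinate $i$ in $q-1$ ways, the remaining $d-1$ coordinates freely), and the number of \emph{crossing} $i$-edges is $(q/h - 1)q^{d-1}$. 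Summing over $i \in [d]$, the total number of edges is $d(q-1)q^{d-1}$ and the number of crossing edges is $d(q/h-1)q^{d-1}$.

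Then the crossing fraction is
\[
\frac{d(q/h-1)q^{d-1}}{d(q-1)q^{d-1}} = \frac{q/h-1}{q-1} \le \frac{q/h}{q} = \frac{1}{h} = \varepsilon,
\]
which is exactly the bound claimed. I would present this cleanly by first fixing a direction $i$ and doing the one-dimensional count, then multiplying through; alternatively one can argue per-region that a region $R_{\mathbf{j}}$ has at most $d h^{d-1}$ outgoing crossing edges (at most $h^{d-1}$ across each of its $2d$ faces, but each such edge is shared, so $d h^{d-1}$ after halving appropriately) against roughly $d(h-1)h^{d-1}$ internal edges, but the global direction-by-direction count is the most transparent and avoids boundary-effect bookkeeping.

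I do not expect any serious obstacle here; the statement is essentially a counting exercise once the product partition is written down. The only mild subtlety is getting the off-by-one terms right (the ``$-1$'' in $q/h - 1$ and $q - 1$), and noting that they only help — the exact crossing fraction is $\frac{q/h-1}{q-1}$, which is strictly below $\varepsilon = 1/h$, so the ``at most'' in the statement holds with room to spare. A secondary point worth stating explicitly is that we need $h \mid q$ so that the blocks $B_j$ genuinely partition $[q]$ into equal intervals; this is exactly the hypothesis of the lemma, so no additional assumptions are required.
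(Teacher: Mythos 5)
Your proposal is correct and uses essentially the same construction as the paper: the partition into $h$-ary $d$-dimensional subcubes obtained as a product of length-$h$ intervals along each axis. The only difference is presentational — you count crossing edges globally, direction by direction, obtaining the exact fraction $\frac{q/h-1}{q-1}\le 1/h$, whereas the paper amortizes at most $dh^{d-1}$ crossing edges against the $d(h-1)h^{d-1}$ internal edges of each subcube; both yield the claimed $\varepsilon$-bound.
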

\begin{proof}
We subdivide the node set into $h$-ary $d$-dimensional subcubes; for an example of the subdivision of the node set of a $6$-ary $2$-dimensional hypercube into $2$-ary $2$-dimensional subcubes see Figure~\ref{fig:toy_and_grid}. There are $(q/h)^d$ such subcubes. The edges crossing the regions are those connecting the faces of adjacent subcubes. For each subcube, we attribute for each dimension one face to each subcube (the opposite face being accounted for by the adjacent subcube in that direction). Thus, we have at most $dh^{d-1}$ crossing edges per subcube. The total number of edges per subcube are these crossing edges plus the $d(h-1)h^{d-1}$ edges within the subcube. Overall, the fraction of crossedges is thus at most $1/(1+(h-1))=1/h$, as claimed.
\end{proof}

Note that the above result and proof extend to tori, which also include the ``wrap-around'' edges connecting the first and last nodes in any given dimension.

\paragraph{Minor free Graphs}
Another general class of graphs that can be partitioned in a similar fashion are minor free bounded-degree graphs.

\begin{definition}[$H$-Minor free Graphs]
For a fixed graph $H$, $H$ is a minor of $G$ if $H$ is isomorphic to a graph that can be obtained by zero or more
edge contractions on a subgraph of $G$. We say that a graph G is $H$-minor free if $H$ is not a minor of $G$.
\end{definition}
For any such graph, we can apply a corollary from \cite[Coro.~2]{LeviR15}, which is based on~\cite{alon1990separator}, to construct a suitable partition.

\begin{theorem}[\cite{LeviR15}]
Let $H$ be a fixed graph. There is a constant $c(H) > 1$ such that for every $\eps \in (0, 1]$ and
every $H$-minor free graph $G = (V, E)$ with degree bounded by $\Delta$ a partition $R_1,\ldots,R_k\subseteq V$ with the following properties can be found in time $O(|V|^{3/2})$:
\begin{enumerate}[(i)]
    \item $\forall i : |R_i|\leq \frac{c(H)\Delta^2}{\eps^2}$,
    \item $\forall i$ the subgraph induced by $R_i$ in $G$ is connected.
    \item $|\{(u,v) \mid u \in R_i, v \in R_j, i\neq j\}|\leq \eps \cdot |V|$.
\end{enumerate}
\end{theorem}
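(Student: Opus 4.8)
\emph{Proof plan.} This is the partition result of~\cite{LeviR15}, built on the $H$-minor-free separator theorem of~\cite{alon1990separator}; I would reconstruct it by a recursive separator decomposition. Recall the \emph{separator primitive}: for every fixed $H$ there is a constant $s(H)$ such that any $H$-minor free graph on $N$ vertices admits a vertex set $S$ of size $|S|\le s(H)\sqrt N$ whose removal leaves connected components each of size at most $\tfrac23 N$, and such an $S$ is computable in $O(N^{3/2})$ time (see~\cite{LeviR15,alon1990separator} and references therein). The point making this usable recursively is that every induced subgraph of an $H$-minor free graph is again $H$-minor free and still has maximum degree at most $\Delta$, so the primitive can be reapplied inside each component.

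\textbf{Recursion and separator budget.} Put $b\triangleq c(H)\Delta^2/\eps^2$ (with $c(H)>1$ chosen at the end) and recursively apply the primitive to every current piece of size exceeding $b$, stopping on pieces of size at most $b$; let $S^\ast$ be the union of all separators produced and $P_1,\dots,P_t$ the final pieces, i.e.\ the connected components of $G[V\setminus S^\ast]$, each of size at most $b$. The crucial estimate is $|S^\ast|\le A(H)\,n/\sqrt b$ for a constant $A(H)$. To see it, let $T(N)$ be the worst-case size of $S^\ast$ on an $N$-vertex input; then $T(N)=0$ for $N\le b$ and $T(N)\le s(H)\sqrt N+\sum_i T(N_i)$ with $\sum_i N_i\le N$ and each $N_i\le\tfrac23 N$. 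Since an $O(\sqrt N)$-size separator leaves all but $O(\sqrt N)$ vertices inside the components, for large $N$ there are at least two components, which by concavity forces $\sum_i\sqrt{N_i}\ge(1+\delta)\sqrt N$ for an absolute $\delta>0$; feeding the ansatz $T(N)\le A(H)\bigl(N/\sqrt b-\sqrt N\bigr)$ into the recursion and using this bound closes the induction once $A(H)$ is large enough in terms of $s(H)$. Informally: the geometric sum over recursion levels is dominated by its deepest one, which has $\Theta(n/b)$ pieces of size $\Theta(b)$, each contributing $\Theta(\sqrt b)$ separator vertices.

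\textbf{Assembling the partition.} Output the regions $P_1,\dots,P_t$ together with one singleton region $\{v\}$ for each $v\in S^\ast$. Property~(i) holds since $|P_i|\le b=c(H)\Delta^2/\eps^2$ and singletons have size $1\le b$; property~(ii) holds since each $P_i$ is a connected component of an induced subgraph of $G$, hence connected in $G$, and singletons are trivially connected. For property~(iii), every edge joining two different regions is incident to $S^\ast$ --- two distinct $P_i,P_j$ have no edge between them in $G$, being components of $G[V\setminus S^\ast]$ --- so the number of such ordered pairs is at most $2\Delta|S^\ast|\le 2\Delta A(H)\,n/\sqrt b=\bigl(2A(H)/\sqrt{c(H)}\bigr)\,\eps\, n$, which is at most $\eps|V|$ once $c(H)\ge\max\{4A(H)^2,1\}$ (this simultaneously secures $c(H)>1$). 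Finally, summing the primitive's cost over the recursion tree, the work on level $j$ is at most $\bigl(\sum_{\text{pieces at level }j}N_i\bigr)\cdot(\max_i N_i)^{1/2}\le n\cdot\bigl(n(2/3)^j\bigr)^{1/2}$, a geometric series totalling $O(n^{3/2})$.

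\textbf{Main obstacle.} The delicate step is the separator-budget estimate: one must verify that the separator theorem genuinely applies at every node of the recursion (via $H$-minor freeness of induced subgraphs and the inherited degree bound) and that the recursion for $|S^\ast|$ is dominated by its bottom level, so that one obtains $|S^\ast|=O(n/\sqrt b)$ with the correct dependence on $b$ --- precisely what allows the cut to be pushed below $\eps|V|$ and what fixes the final value of $c(H)$.
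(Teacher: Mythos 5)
This statement is quoted from \cite{LeviR15} (building on \cite{alon1990separator}); the paper itself gives no proof of it, so there is nothing internal to compare against. Your reconstruction via recursive application of the $H$-minor-free separator theorem --- splitting pieces larger than $b=c(H)\Delta^2/\eps^2$, showing the accumulated separator has size $O(n/\sqrt{b})$ by the standard ansatz $T(N)\le A\bigl(N/\sqrt{b}-\sqrt{N}\bigr)$, taking the surviving components plus separator singletons as regions, and charging cross-edges to separator vertices via the degree bound --- is exactly the classical route by which such $(\eps,O(\Delta^2/\eps^2))$-partitions are obtained in the cited literature, and the argument is sound, including the $O(n^{3/2})$ accounting over the recursion tree. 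One small presentational point: the estimate $\sum_i\sqrt{N_i}\ge(1+\delta)\sqrt{N}$ does not follow from ``at least two components'' and concavity alone; it needs the quantitative balance guarantee $N_i\le\tfrac23 N$ together with $\sum_i N_i\ge N-s(H)\sqrt{N}$ (which yields $\sum_i\sqrt{N_i}\ge\sqrt{3/2}\,(1-s(H)/\sqrt{N})\sqrt{N}$ for $N$ exceeding a constant, absorbed into the choice of $c(H)$), but this is exactly the balance property your separator primitive already provides, so the gap is only in the phrasing, not in the proof.
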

\begin{rem}
Grids and tori of dimension $d>2$ are not minor-free.
\end{rem}
We note that this construction is not satisfactory, as it involves large constants. It demonstrates that a large class of graphs is amenable to the suggested approach, but it is advisable to search for optimized constructions for more specialized graph families before applying the scheme.

\subsection{Reinforcement}
Equipped with a suitable partition of the original graph $G=(V,E)$ into disjoint regions $R_1,\ldots,R_k\subseteq V$, we reinforce as follows.
As before, we set $V'\triangleq V\times [\ell]$, denote $v_i\triangleq (v,i)$, define $P(v_i)\triangleq v$, and set $\ell\triangleq f+1$. However, the edge set of $G'$ differs. For $e=(v,w)\in E$,
\begin{align*}
E_e'\triangleq
 \begin{cases}
\{(v_i,w_i)\,|\,i\in [\ell]\} & \mbox{if $\exists k'\in [k]: v,w\in R_{k'}$}\\
\{(v_i,w_j)\,|\,i,j\in [\ell]\} & \mbox{else,}
\end{cases}
\end{align*}
and we set $E'\triangleq \bigcup_{e\in E} E_e'$.

\paragraph{Simulation under \texorpdfstring{\om{p}}{Om(p)}}
Consider $v\in V$. We want to maintain the invariant that in each round, \emph{some} $v_i$ has a copy of the state of $v$ in $A$. To this end, $v'\in V'$
\begin{compactenum}[\bfseries (1)]
\item initializes local copies of all state variables of $v$ as in $A$ and sets $\know_{v'}=\true$;
\item sends on each link $(v',w')\in E'$ in each round
\begin{compactitem}
\item message $M$, if $P(v')$ would send $M$ via $(P(v'),P(w'))$ when executing $A$ and $\know_{v'}=\true$,
\item a special symbol $\bot$ if $\know_{v'}=\true$, but $v$ would not send a message via $(P(v'),P(w'))$ according to $A$, or
\item no message if $\know_{v'}=\false$;
\end{compactitem}
\item if, in a given round, $\know_{v'}=\true$ and $v'$ receives for each neighbor $w$ of $P(v')$ a message from some $w_j\in V'$, it updates the local copy of the state of $v$ in $A$ as if $P(v')$ received this message (interpreting $\bot$ as no message); and
\item if this is not the case, $v'$ sets $\know_{v'}=\false$.
\end{compactenum}
We claim that as long as $\know_{v'}=\true$ at $v'$, $v'$ has indeed a copy of the state of $P(v')$ in the corresponding execution of $A$; therefore, it can send the right messages and update its state variables correctly.

\begin{lemma}\label{applemma:weak_om}
Suppose that for each $k'\in [k]$, there is some $i\in [\ell]$ so that $\{v_i\,|\,v\in R_{k'}\}\cap F'=\emptyset$. Then $A'$ simulates $A$.
\end{lemma}
\begin{proof}
Select for each $R_{k'}$, $k'\in [k]$, some $i$ such that $\{v_i\,|\,v\in R_{k'}\}\cap F'=\emptyset$ and denote by $C$ the union of all these nodes. As $P(C)=V$, it suffices to show that each $v'\in C$ successfully maintains a copy of the state of $P(v')$ under $A$. However, we also need to make sure that \emph{all} messages, not only the ones sent by nodes in $c$, are ``correct,'' in the sense that a message sent over edge $(v',w')\in E'$ in round $r$ would be sent by $A$ over $(P(v'),P(w'))$ (where $\bot$ means no message is sent). Therefore, we will argue that the set of nodes $T_r\triangleq \{v'\in V'\,|\,\know_{v'}=\true \text{ in round }r\}$ knows the state of their counterpart $P(v')$ under $A$ up to and including round $r\in \NN$. As nodes $v'$ with $\know_{v'}=\false$ do not send any messages, this invariant guarantees that all sent messages are correct in the above sense.

We now show by induction on the round number $r\in \NN$ that (i) each $v'\in T_r$ knows the state of $P(v')$ under $A$ and (ii) $C\subseteq T_r$. Due to initialization, this is correct initially, i.e., in ``round $0$;'' we use this to anchor the induction at $r=0$, setting $T_0\triangleq V'$.

For the step from $r$ to $r+1$, note that because all $v'\in T_r$ have a copy of the state of $P(v')$ at the end of round $r$ by the induction hypothesis, each of them can correctly determine the message $P(v')$ would send over link $(v,w)\in E$ in round $r+1$ and send it over each $(v',w')\in E'$ with $P(w')=w$. Recall that $v'\in T_{r+1}$ if and only if $v'\in T_r$ and for each $(w,P(v'))\in E$ there is at least one $w'\in V'$ with $P(w')=w$ from which $v'$ receives a message. Since under \om{p} nodes in $F'$ may only omit sending messages, it follows that $v'\in T_{r+1}$ correctly updates the state variables of $P(v')$, just as $P(v')$ would in round $r+1$ of $A$.

It remains to show that $C\subseteq T_{r+1}$. Consider $v_i\in C$ and $(w,v)\in E$. If $v,w\in R_{k'}$ for some $k'\in [k]$, then $w_i\in C$ by definition of $C$. Hence, by the induction hypothesis, $w_i\in T_r$, and $w_i$ will send the message $w$ would send in round $r+1$ of $A$ over $(w,v)\in E$ to $v_i$, using the edge $(w_i,v_i)\in E'$. If this is not the case, then there is some $j\in [\ell]$ such that $w_j\in C$ and we have that $(w_j,v_i)\in E'$. Again, $v_i$ will receive the message $w$ would send in round $r+1$ of $A$ from $w_j$. We conclude that $v_i$ receives at least one copy of the message from $w$ for each $(w,v)\in E$, implying that $v\in T_{r+1}$ as claimed. Thus, the induction step succeeds and the proof is complete.
\end{proof}

\paragraph{Resilience of the Reinforcement}
We denote $R\triangleq \max_{k'\in [k]}\{|R_{k'}|\}$ and $r\triangleq \min_{k'\in [k]}\{|R_{k'}|\}$.
\begin{theorem}\label{thm:weak_om}
The above construction is a valid reinforcement for \om{p} if $p \in o((n/r)^{-1/(f+1)}/R)$. Moreover, if $G$ contains $\Omega(n)$ nodes with non-zero outdegree and $R\in O(1)$, $p\in \omega(n^{-1/(f+1)})$ implies that the reinforcement is not valid.
\end{theorem}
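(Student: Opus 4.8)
The plan is to combine Lemma~\ref{lemma:weak_om} with a union bound over the $k$ regions for the positive direction, and a union bound over nodes for the negative direction, mirroring the proofs of Theorem~\ref{thm:strong_byz} and the preceding theorem. For the positive direction, I would first invoke Lemma~\ref{lemma:weak_om}: $A'$ simulates $A$ whenever, for every region $R_{k'}$, there is some copy index $i\in[\ell]$ such that no node of $\{v_i \mid v\in R_{k'}\}$ lies in $F'$. Fix a region $R_{k'}$. The $\ell=f+1$ copy-layers of $R_{k'}$ are disjoint vertex sets, each sampled independently into $F'$ with rate $p$, so the probability that layer $i$ is entirely fault-free is $(1-p)^{|R_{k'}|}$, and the probability that \emph{every} one of the $f+1$ layers contains a faulty node is $(1-(1-p)^{|R_{k'}|})^{f+1}$. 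Using $1-(1-p)^{|R_{k'}|}\leq p|R_{k'}|\leq pR$, this ``bad region'' probability is at most $(pR)^{f+1}$. There are $k\leq n/r$ regions (since each has size at least $r$), so by a union bound the probability that some region is bad is at most $(n/r)(pR)^{f+1}$. For $p\in o((n/r)^{-1/(f+1)}/R)$ this is $o(1)$, so the precondition of Lemma~\ref{lemma:weak_om} holds a.a.s.\ and the reinforcement is valid.

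For the negative direction I would argue exactly as in the earlier theorems. Assume $R\in O(1)$ and $p\in\omega(n^{-1/(f+1)})\cap o(1)$ (the $o(1)$ is implicit since $p$ is a probability and $R$ constant forces $p = o(1)$ anyway; I would spell this out). For a node $v$ with non-zero outdegree, all $f+1=\ell$ of its copies are faulty with probability $p^{f+1}$, and these events are independent across distinct nodes since the copy sets are disjoint. With $\Omega(n)$ such nodes, the probability that at least one has all copies faulty is $1-(1-p^{f+1})^{\Omega(n)}\subseteq 1-(1-\omega(1/n))^{\Omega(n)} = 1-o(1)$. When this happens, pick such a node $v$; there is an algorithm $A$ and input under which $v$ sends a message across some outgoing edge $(v,w)$ in some round, all copies of $v$ are faulty and omit this message, so no copy $w_j$ ever receives it, and by the definition of the simulation (and of \om{p}) no copy of $w$ can compute the correct state of $w$; hence $A'$ fails to simulate $A$ and the reinforcement is not valid.

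The only mild subtlety — and the one place I would be careful — is the bookkeeping on how $R$ and $r$ enter the threshold: the union bound runs over $k\le n/r$ regions, each contributing a factor $(pR)^{f+1}$, which is why the threshold reads $p\in o((n/r)^{-1/(f+1)}/R)$ rather than simply $o(n^{-1/(f+1)})$; when the partition is balanced with $R,r\in\Theta(1)$ (as in the grid and minor-free constructions via Lemma~\ref{lemma:hypercube} and the theorem of~\cite{LeviR15}) this collapses to the same asymptotic threshold $o(n^{-1/(f+1)})$ matching the lower bound, and I would note this explicitly. Everything else is a routine adaptation of the two preceding resilience proofs, with $(1-p)^{|R_{k'}|}$ playing the role that $(1-p)$ played there.
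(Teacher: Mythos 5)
Your proposal is correct and follows essentially the same argument as the paper: invoke Lemma~\ref{lemma:weak_om}, bound the per-region failure probability by $(Rp)^{f+1}$ via $1-(1-p)^{|R_{k'}|}\leq Rp$ and independence of the disjoint copy layers, union-bound over the at most $n/r$ regions, and for the converse use independence across the disjoint copy sets of the $\Omega(n)$ nodes with non-zero outdegree to find a.a.s.\ a node all of whose copies omit a required message. The only nit is your aside that ``$R$ constant forces $p=o(1)$'' --- this is not needed (nor true in general), since the lower-bound computation $1-(1-p^{f+1})^{\Omega(n)}=1-o(1)$ goes through for any $p\in\omega(n^{-1/(f+1)})$.
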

\begin{proof}
By Lemma~\ref{applemma:weak_om}, $A'$ simulates $A$ if for each $k'\in [k]$, there is some $i\in [\ell]$ so that $\{v_i\,|\,v\in R_{k'}\}\cap F'=\emptyset$. For fixed $k'$ and $i\in [\ell]$,
\begin{equation*}
\Prr{\{v_i\,|\,v\in R_{k'}\}\cap F'=\emptyset}=(1-p)^{|R_{k'}|}\geq 1-Rp.
\end{equation*}
Accordingly, the probability that for a given $k'$ the precondition of the lemma is violated is at most $(Rp)^{f+1}$. As $k\leq n/r$, taking a union bound over all $k'$ yields that with probability at least $1-n/r\cdot (Rp)^{f+1}$, $A'$ simulates $A$. Therefore, the reinforcement is valid if $p \in o((n/r)^{-1/(f+1)}/R)$.

Now assume that $r\leq R\in O(1)$ and also that $p\in \omega(n^{-1/(f+1)})\subseteq \omega((n/r)^{-1/(f+1)}/R)$. Thus, for each $v\in V$, all $v'\in V'$ with $P(v')=v$ simultaneously end up in $F'$ with probability $\omega(1/n)$. Therefore, if $\Omega(n)$ nodes have non-zero outdegree, with a probability in $1-(1-\omega(1/n))^{\Omega(n)}=1-o(1)$ for at least one such node $v$ all its copies end up in $F'$. In this case, the simulation fails if $v$ sends a message under $A$, but all copies of $v'$ suffer omission failures in the respective round.
\end{proof}
\paragraph{Efficiency of the Reinforcement}
For $f\in \NN$, we have that $\nu = \ell = f+1$ and $\eta = (1-\varepsilon)\ell + \varepsilon \ell^2 = 1+(1+\varepsilon)f+\varepsilon f^2$, while we can sustain $p\in o(n^{-1/(f+1)})$.
In the special case of $f=1$ and $\varepsilon=1/5$, we improve from $p\in o(1/n)$ for the original network to $p\in o(1/\sqrt{n})$ by doubling the number of nodes and multiplying the number of edges by~$2.4$.

\begin{rem}
For hypercubes and tori, the asymptotic notation for $p$ does not hide huge constants.
Lemma~\ref{lemma:hypercube} shows that $h$ enters the threshold in Theorem~\ref{thm:weak_om} as $h^{-d+1/2}$.
For the cases of $d=2$ and $d=3$, which are the most typical (for $d>3$ grids and tori suffer from large distortion when embedding them into $3$-dimensional space), the threshold on $p$ degrades by factors of $11.2$ and $55.9$, respectively.
\end{rem}

\subsection{Simulation under \texorpdfstring{\byz{p}}{Byz(p)}}

The same strategy can be applied for the stronger fault model \byz{p}, if we switch back to having $\ell=2f+1$ copies and nodes accepting the majority message among all messages from copies of a neighbor in the original graph.

Consider node $v\in V$. We want to maintain the invariant that in each round, a majority among the nodes $v_i$, $i\in [\ell]$, has a copy of the state of $v$ in $A$. For $v'\in V'$ and $(w,P(v'))\in E$, set $N_{v'}(w)\triangleq \{w'\in V'\,|\,(w',v')\in E'\}$. With this notation, $v'$ behaves as follows.
\begin{compactenum}[\bfseries (1)]
\item It initializes local copies of all state variables of $v$ as in $A$.
\item It sends in each round on each link $(v',w')\in E'$ the message $v$ would send on $(P(v'),P(w'))$ when executing $A$ (if $v'$ cannot compute this correctly, it may send an arbitrary message).
\item It updates its state in round $r$ as if it received, for each $(w,P(v'))\in E$, the message the majority of nodes in $N_{v'}(w)$ sent.
\end{compactenum}

\begin{lemma}\label{lemma:weak_byz}
Suppose for each $k'\in [k]$, there are at least $f+1$ indices $i\in [\ell]$ so that $\{v_i\,|\,v\in R_{k'}\}\cap F'=\emptyset$. Then $A'$ simulates $A$.
\end{lemma}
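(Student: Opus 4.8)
The plan is to mirror the proof of Lemma~\ref{lemma:weak_om}, replacing the notion of a single surviving copy per region by a surviving \emph{majority} of copies, and replacing ``receives a message from some $w_j$'' by ``agrees with the majority of the $N_{v'}(w)$.'' Concretely, for each region $R_{k'}$ fix a set $I_{k'}\subseteq[\ell]$ of at least $f+1$ indices $i$ with $\{v_i\mid v\in R_{k'}\}\cap F'=\emptyset$ (such a set exists by hypothesis), and let $C\triangleq\bigcup_{k'}\{v_i\mid v\in R_{k'},\,i\in I_{k'}\}$ be the corresponding ``good'' copies. Since $P(C)=V$ and, for each $v\in V$, the copies of $v$ in $C$ form a strict majority (at least $f+1$ out of $\ell=2f+1$), it suffices to prove by induction on the round number $r$ that every $v'\in C$ holds the correct state of $P(v')$ in $A$ at the end of round $r$.

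For the induction step I would argue exactly as in Lemma~\ref{lemma:weak_om}, using that correctness propagates along $C$: take $v_i\in C$ and an in-edge $(w,P(v_i))=(w,v)\in E$, and show that a majority of the nodes in $N_{v_i}(w)$ send the message $w$ would send to $v$ in round $r+1$ of $A$. There are two cases, governed by the edge rule defining $E'_e$. If $v,w\in R_{k'}$ for the region containing $v$, then $N_{v_i}(w)=\{w_i\}$ is a singleton (only the $i$-th copy of $w$), and $w_i\in C$ by definition of $C$, so by the induction hypothesis it is non-faulty and sends the correct message; a majority of a singleton is that singleton, so $v_i$ updates correctly. If $v$ and $w$ lie in different regions, then $N_{v_i}(w)=\{w_j\mid j\in[\ell]\}$ consists of all $\ell=2f+1$ copies of $w$, and the region of $w$ contributes at least $f+1$ copies in $C$, each non-faulty and (by the induction hypothesis) holding the correct state of $w$; hence at least $f+1$ of the $2f+1$ messages agree on the correct message, which is therefore the strict majority, and $v_i$ updates correctly. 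In both cases $v_i$ also sends correct messages in round $r+1$, closing the induction; the base case $r=0$ is the initialization, which is correct by construction.

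The one point that needs a little care is the first case above, where a region-internal edge is simulated by only a single edge $(w_i,v_i)$ rather than a full bipartite bundle: here there is no redundancy at all at the link level, so the argument must rely on the \emph{index-consistent} choice of $C$, namely that whenever $v_i\in C$ and $w$ is in the same region as $v$, necessarily $w_i\in C$ as well (because $I_{k'}$ was chosen per region, not per node). I would make this explicit, since it is the reason the bookkeeping works and is the analogue of the corresponding step in Lemma~\ref{lemma:weak_om}. Everything else is routine: the ``majority has the correct state'' invariant for all of $V$ follows immediately from $C$ containing a majority of copies of each node, which is what the simulation definition under \byz{p} requires. The main (mild) obstacle is thus purely organizational --- setting up $C$ so that both the majority count and the index-consistency property hold simultaneously --- rather than any genuine difficulty; no new probabilistic or combinatorial estimate is needed for the lemma itself, as the resilience analysis (the union bound over regions and the matching impossibility) is deferred to the subsequent theorem.
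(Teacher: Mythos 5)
Your proposal is correct and follows essentially the same route as the paper's proof: fix $f+1$ fault-free indices per region, take $C$ as the union of these copies, and induct on the round number with the same two-case analysis (singleton $N_{v'}(w)$ with index-consistency inside a region, majority among all $2f+1$ copies across regions). The index-consistency point you flag is exactly the "by definition of $C$, $w_i\in C$" step in the paper, so there is no substantive difference.
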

\begin{proof}
Select for each $R_{k'}$, $k'\in [k]$, $f+1$ indices $i$ such that $\{v_i\,|\,v\in R_{k'}\}\cap F'=\emptyset$ and denote by $C$ the union of all these nodes. We claim that each $v'\in C$ successfully maintains a copy of the state of $P(v')$ under $A$. We show this by induction on the round number $r\in \NN$, anchored at $r=0$ due to initialization.

For the step from $r$ to $r+1$, observe that because all $v'\in C$ have a copy of the state of $P(v')$ at the end of round $r$ by the induction hypothesis, each of them can correctly determine the message $P(v')$ would send over link $(v,w)\in E$ in round $r+1$ and send it over each $(v',w')\in E$ with $P(w')=w$. For each $v'\in C$ and each $(w,P(v'))$, we distinguish two cases. If $P(v')$ and $w$ are in the same region, let $i$ be such that $v'=v_i$. In this case, $N_{v'}(w)=\{w_i\}$ and, by definition of $C$, $w_i\in C$. Thus, by the induction hypothesis, $w_i$ sends the correct message in round $r+1$ over the link $(w',v')$. On the other hand, if $P(v')$ and $w$ are in different regions, $N_{v'}(w)=\{w_i\,|\,i\in [\ell]\}$. By the definition of $C$ and the induction hypothesis, the majority of these nodes (i.e., at least $f+1$ of them) sends the correct message $w$ would send over $(w,P(v'))$ in round $r+1$ when executing $A$. We conclude that $v'$ correctly updates its state, completing the proof.
\end{proof}

\paragraph{Resilience of the Reinforcement}
As before, denote $R\triangleq \max_{k'\in [k]}\{|R_{k'}|\}$ and $r\triangleq \min_{k'\in [k]}\{|R_{k'}|\}$.
\begin{theorem}\label{thm:byz}
The above construction is a valid reinforcement for the fault model \byz{p} if $p \in o((n/r)^{-1/(f+1)}/R)$. Moreover, if $G$ contains $\Omega(n)$ nodes with non-zero outdegree, $p\in \omega(n^{-1/(f+1)})$ implies that the reinforcement is not valid.
\end{theorem}
\begin{proof}
By Lemma~\ref{lemma:weak_byz}, $A'$ simulates $A$ if for each $k'\in [k]$, there are at least $f+1$ indices $i\in [\ell]$ so that $\{v_i\,|\,v\in R_{k'}\}\cap F'=\emptyset$. For fixed $k'$ and $i\in [\ell]$,
\begin{equation*}
\Prr{\{v_i\,|\,v\in R_{k'}\}\cap F'=\emptyset}=(1-p)^{|R_{k'}|}\geq 1-Rp.
\end{equation*}
Thus, analogous to the proof of Theorem~\ref{thm:strong_byz}, the probability that for a given $k'$ the condition is violated is at most
\begin{align*}
&\sum_{j=f+1}^{2f+1}\binom{2f+1}{j}(Rp)^j(1-Rp)^{2f+1-j}\\
=\,& (2e)^f(Rp)^{f+1}(1+o(1)).
\end{align*}
By a union bound over the at most $n/r$ regions, we conclude that the precondition $p \in o((n/r)^{-1/(f+1)}/R)$ guarantees that the simulation succeeds a.a.s.

For the second statement, observe that for each node $v\in V$ of non-zero outdegree,
\begin{equation*}
\Prr{|\{v_i\}\cap F'|\geq f+1}\geq p^{f+1}= \omega\left(\frac{1}{n}\right).
\end{equation*}
Thus, a.a.s.\ there is such a node $v$. Let $(v,w)\in E$ and assume that $A$ sends a message over $(v,w)$ in some round. If $v$ and $w$ are in the same region, the faulty nodes sending an incorrect message will result in a majority of the $2f+1=|\{w'\in V'\,|\,P(w')=w\}|$ copies of $w$ attaining an incorrect state (of the simulation), i.e., the simulation fails. Similarly, if $w$ is in a different region than $v$, for each copy of $w$ the majority message received from $N_{w'}(v)$ will be incorrect, resulting in an incorrect state.
\end{proof}
\begin{rem}
Note that the probability bounds in Theorem~\ref{thm:byz} are essentially tight in case $R\in O(1)$. A more careful analysis establishes similar results for $r\in \Theta(R)\cap \omega(1)$, by considering w.l.o.g.\ the case that all regions are connected and analyzing the probability that within a region, there is some path so that for at least $f+1$ copies of the path in $G'$, some node on the path is faulty. However, as again we consider the case $R\in O(1)$ to be the most interesting one, we refrain from generalizing the analysis.
\end{rem}
\paragraph{Efficiency of the Reinforcement}
For $f\in \NN$, we have that $\nu = \ell = 2f+1$ and $\eta = (1-\varepsilon)\ell + \varepsilon \ell^2 = 1+(2+2\varepsilon)f+4\varepsilon f^2$, while we can sustain $p\in o(n^{-1/(f+1)})$.
In the special case of $f=1$ and $\varepsilon=1/5$, we improve from $p\in o(1/n)$ for the original network to $p\in o(1/\sqrt{n})$ by tripling the number of nodes and multiplying the number of edges by~$4.2$.

\section{Empirical Evaluation}\label{sec:eval}

We have shown that our approach from \S~\ref{sec:eff} works particularly well 
for graphs that admit a certain partitioning, such as
sparse graphs (e.g., minor-free graphs) or low-dimensional
hypercubes. To provide some empirical motivation for the relevance
of these examples, we note that the topologies collected
in the Rocketfuel \cite{spring2002measuring} and Internet Topology Zoo \cite{knight2011internet} projects
are all sparse: almost a third (namely 32\%) of the topologies even belong to the family of 
cactus graphs, and roughly half of the graphs (49\%) are outerplanar \cite{sigmetrics18tomography}.

To complement our analytical results and study the reinforcement cost
of our approach in realistic networks, we conducted simulations on 
the around 250 networks from the Internet Topology Zoo. 
In the following, we will report on our main insights.
Due to space constraints, we focus on the case of omission faults;
the results for Byzantine faults follow the same general trends.

Recall that we replace each node by $f+1$ of its copies, and each edge with endpoints in 
different regions of the partition with $(f+1)^{2}$ copies; every other edge is replaced by $f+1$ copies. 
Our goal is to do this partitioning such that it minimizes the edge overhead of the new network and 
maximizes the probability of the network's resilience.
The fault probability of the network for given $p$, $f$ and partitions with $l_{1}, l_{2}, ..., l_{k}$ nodes is calculated as 
$1 - \prod_{i=1}^{k} [1-(1-(1-p)^{l_{i}})^{f+1}]$.

In the following, as a case study, we fix a target network failure probability of at most $0.01$.
That is, the reinforced network is guaranteed to operate correctly with a probability of $99\%$, and we aim to maximize the probability $p$ with which nodes independently fail subject to this constraint.
For this fixed target resilience of the network, we determine the value of $p$ matching it using the above formula. 
We remark that the qualitative behavior for smaller probabilities of network failure is the same, where the more stringent requirement means that our scheme outperforms naive approaches for even smaller network sizes.

For the examined topologies, it turned out that no specialized tools were needed to find good partitionings.
We considered a \emph{Spectral Graph Partitioning} tool~\cite{Hagen1992NewSM} and Metis \cite{Karypis1998fast},
a partitioning algorithm from a python library. 
For small networks (less than 14 nodes), we further implemented a brute-force algorithm,
which provides an optimal baseline.

\begin{figure}[t]
	\centering
	\makebox[0pt]{\includegraphics[width=0.5\textwidth,trim=0 00bp 0 000bp,clip]{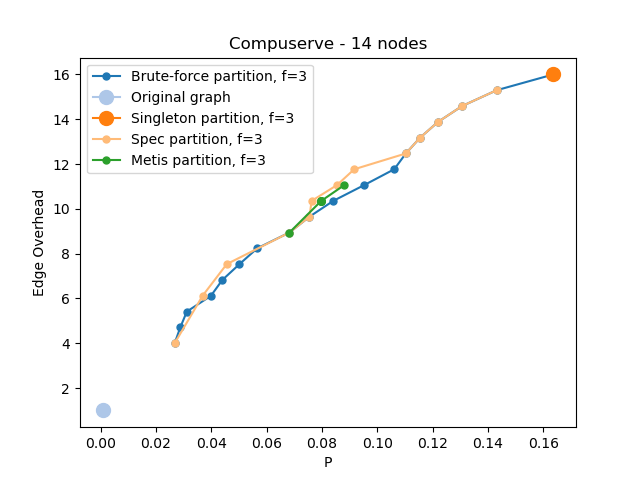}}
	\caption{Edge overhead for $f=3$ and different partitioning algorithms as a function of $p$.}
	\label{fig:compuserve}
\end{figure}

Figure \ref{fig:compuserve} shows the resulting edge overheads for the different partitioning algorithms
as a function of $p$ and for $f=3$, at hand of a specific example.
For reference, we added the value of $p$ for the original graph ($f=0$) to the plot, which has an overhead factor of $1$ (no redundancy).

As to be expected, for each algorithm and the fixed value of $f=3$, as the number of components in partitionings increases, the edge overhead and $p$ 
increase as well.
The ``Singleton partition'' point for $f=3$ indicates the extreme case where the size of the components is equal to 1 and the approach becomes identical to strong reinforcement (see \S~\ref{sec:strong_om});
hence, it has an edge overhead of $(f+1)^{2}=16$.
The leftmost points of the $f=3$ curves correspond to the other extreme of ``partitioning'' the nodes into a single set, resulting in naive replication of the original graph, at an edge overhead of $f+1=4$.

We observed this general behavior for networks of all sizes under varying $f$, where the spectral partitioning consistently outperformed Metis, and both performed very close to the brute force algorithm on networks to which it was applicable.
We concluded that the spectral partitioning algorithm is sufficient to obtain results that are close to optimal for the considered graphs, most of which have fewer than 100 nodes, with only a handful of examples with size between 100 and 200.
Accordingly, in the following we confine the presentation to the results obtained using the spectral partitioning algorithm.

\begin{figure}[t]
	\centering
	\makebox[0pt]{\includegraphics[width=0.5\textwidth,trim=0 00bp 0 000bp,clip]{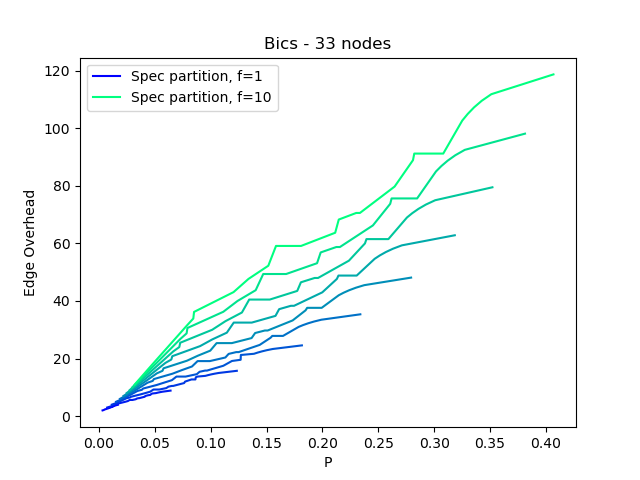}}
	\caption{Edge overhead for $f\in [1,10]$ as a function of $p$.}
	\label{fig:bisc10}
\end{figure}

In Figure \ref{fig:bisc10}, we take a closer look on how the edge overhead
depends on $f$, at hand of a network of 33 nodes. Note that the partitionings do not depend on $f$, causing the 10 curves to have similar shape.
As $f$ increases, the node overhead, edge overhead, and $p$ for the reinforced networks increase.
We can see that it is advisable to use larger values of $f$ only if the strong reinforcement approach for smaller $f$ cannot push $p$ to the desired value.
We also see that $f=1$ is sufficient to drive $p$ up to more than $6\%$, improving by almost two orders of magnitude over the roughly $0.01/33\approx 0.03\%$ the unmodified network can tolerate with probability $99\%$.
While increasing $f$ further does increase resilience, the relative gains are much smaller, suggesting that $f=1$ is the most interesting case.

\begin{figure}[t]
	\centering
	\makebox[0pt]{\includegraphics[width=0.5\textwidth,trim=0 00bp 0 000bp,clip]{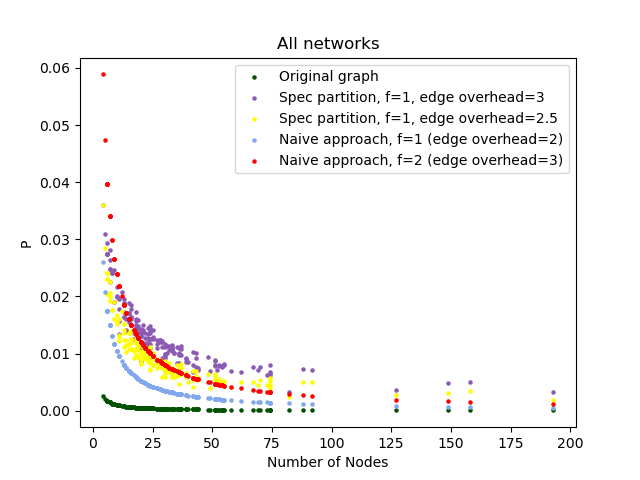}}
	\caption{Study of $p$ for all Topology Zoo networks and $f=1$, sorted by size.\label{fig:allnetworks}}
\end{figure}

Following up on this, in Figure \ref{fig:allnetworks} we plot $p$ for all existing networks in the Topology Zoo using the spectral graph partitioning algorithm and $f=1$.
Specifically, for each network, we calculated the value of $p$ on a set of reinforced networks with different node and edge overheads. Naturally, with increasing network size, the value of $p$ that can be sustained at a given overhead becomes smaller. Note, however, that naive replication quickly loses ground as $n$ becomes larger. In particular, already for about 20 nodes, an edge overhead of 3 with our approach is better than adding \emph{two} redundant copies of the original network, resulting in more nodes, but the same number of edges. Beyond roughly 50 nodes, our approach outperforms two independent copies of the network using fewer edges, i.e., an edge overhead of 2.5.

\begin{figure}[t]
	\centering
	\makebox[0pt]{\includegraphics[width=0.5\textwidth,trim=0 00bp 0 000bp,clip]{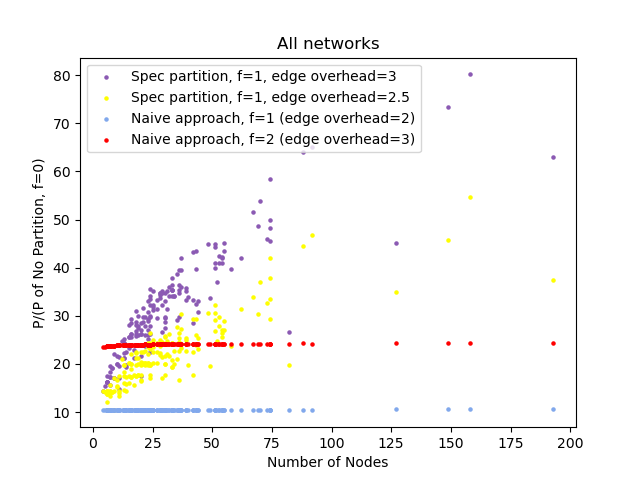}}
	\caption{Relative improvement over baseline for all Topology Zoo networks.\label{allnetworksratio}}
\end{figure}

To show more clearly when our approach outperforms naive network replication, Figure \ref{allnetworksratio} plots the \emph{relative gain} in the probability $p$ of node failure that can be sustained compared to the original network.

This plot is similar to the previous one. The y-axis now represents $p$ divided by the value of $p$ for the original graph. We now see that naive replication provides an almost constant improvement across the board. This is due to the fact that under this simple scheme, the reinforcement fails as soon as in each copy of the graph at least one node fails, as it is possible that a routing path in the original graph involves all nodes corresponding to failed copies. 

Denote by $p_k$ the probability of node failure that can be sustained with $99\%$ reliability when simply using $k$ copies of the original graph (in particular $p_1\approx 0.01/n$). For small $k$, the probability $(1-p_k)^n$ that a single copy of the original graph is fault-free needs to be close to $1$. Hence, we can approximate $(1-p_k)^n\approx 1-p_k n$. The probability that all copies contain a failing node is hence approximately $(p_kn)^k$. Thus, $p_1 n \approx 0.01\approx (p_k n)^k$, yielding that
\begin{equation*}
\frac{p_k}{p_1}=\frac{p_k n}{p_1 n}\approx \frac{0.01^{1/k}}{0.01}=100^{1-1/k}.
\end{equation*}
In particular, we can expect ratios of roughly $10$ for $k=2$ and $21.5$ for $k=3$, respectively. The small discrepancy to the actual numbers is due to the approximation error, which would be smaller for higher target resilience.

As the plot clearly shows, our method achieves a relative improvement that increases with $n$, as predicted by Theorem~\ref{thm:weak_om}.
In conclusion, we see that our approach promises substantial improvements over the naive replication strategy,
which is commonly employed in mission-critical networks
(e.g., using dual planes as in RFC 7855~\cite{springpsr}).

\section{Discussion}\label{sec:disc}

In the previous sections, we have established that constant-factor redundancy can significantly increase reliability of the communication network in a blackbox fashion. Our constructions in \S~\ref{sec:eff} are close to optimal. Naturally, one might argue that the costs are still too high. However, apart from pointing out that the costs of using sufficiently reliable components may be even higher, we would like to raise a number of additional points in favor of the approach.

\paragraph{Node Redundancy}
When building reliable large-scale systems, fault-tolerance needs to be considered on all system levels. Unless nodes are sufficiently reliable, node replication is mandatory, regardless of the communication network. In other words, the node redundancy required by our construction may not be an actual overhead to begin with. When taking this point of view, the salient question becomes whether the increase in links is acceptable. Here, the first observation is that any system employing node redundancy will need to handle the arising additional communication, incurring the respective burden on the communication network. Apart from still having to handle the additional traffic, however, the system designer now needs to make sure that the network is sufficiently reliable for the node redundancy to matter. Our simple schemes then provide a means to provide the necessary communication infrastructure without risking to introduce, e.g., a single point of failure during the design of the communication network; at the same time, the design process is simplified and modularized.

\paragraph{Dynamic Faults}
Because of the introduced fault-tolerance, faulty components do not impede the system as a whole, so long as the simulation of the routing scheme can still be carried out. Hence, one may repair faulty nodes at runtime. If $T$ is the time for detecting and fixing a fault, we can discretize time in units of $T$ and denote by $p_T$ the (assumed to be independent) probability that a node is faulty in a given time slot, which can be bounded by twice the probability to fail within $T$ time. Then the failure probabilities we computed in our analysis directly translate to an upper bound on the expected fraction of time during which the system is not (fully) operational.

\paragraph{Adaptivity}
The employed node- and link-level redundancy may be required for mission-critical applications only, or the system may run into capacity issues. In this case, we can exploit that the reinforced network has a very simple structure, making various adaptive strategies straightforward to implement.
\begin{enumerate}[(i)]
  \item One might use a subnetwork only, deactivating the remaining nodes and links, such that a reinforced network for smaller $f$ (or a copy of the original network, if $f=0$) remains. This saves energy.
  \item One might subdivide the network into several smaller reinforced networks, each of which can perform different tasks.
  \item One might leverage the redundant links to increase the overall bandwidth between (copies of) nodes, at the expense of reliability.
  \item The above operations can be applied locally; e.g., in a congested region of the network, the link redundancy could be used for additional bandwidth. Note that if only a small part of the network is congested, the overall system reliability will not deteriorate significantly.
\end{enumerate}
Note that the above strategies can be refined and combined according to the profile of requirements of the system.

\section{Related Work}
\label{sec:relwork}

Robust routing is an essential feature of dependable
communication networks, and has been explored
intensively in the literature already. 

\paragraph*{Resilient Routing on the Network Layer}
Many existing resilient routing mechanisms on the network layer
can be categorized
according to whether they are supported in the
control plane, e.g., 
\cite{DBLP:conf/spaa/BuschST03,corson1995distributed,francois2005achieving,gafni-lr,greenberg2005clean,oran1990rfc1142},
or in the data plane, e.g.,~\cite{purr,srds19failover,ref4,ddc,plinko-full,keep-fwd},
see also the recent survey  \cite{frr-survey}.
These mechanisms are usually designed to cope with link failures.
Resilient routing algorithms in the control plane
typically rely on a global recomputation of paths
(either
centralized \cite{vahdat2015purpose}, 
distributed \cite{francois2005achieving}
or both \cite{icnp15shear}),
or on techniques based on link reversal \cite{gafni-lr}, and can
hence re-establish policies relatively easily;
however, they come at the price of a relatively high restoration time
\cite{francois2005achieving}.
Resilient routing algorithms in the dataplane can react to failures
significantly faster~\cite{feigenbaum2012brief}; however, 
due to the local nature of the failover, it is challenging to
maintain network policies or even a high degree of resilience~\cite{chiesa2016resiliency}. 
In this line of literature,
the network is usually given and the goal is to re-establish
routing paths quickly, ideally as long as the underlying physical
network is connected (known as perfect resilience~\cite{feigenbaum2012brief,disc20}).
In contrast, in this paper we ask the question of how to enhance the
network in order to tolerate failures, considering also failures
beyond link failures. In particular, we argue that such a re-enforced
network simplifies routing as it is not necessary to compute new paths.
The resulting problems are very different in nature, also in terms
of the required algorithmic techniques.

\paragraph*{Local Faults}
In this paper, we consider more general failure models
than typically studied in the resilient routing literature above,
as our model is essentially a local fault model. 
Byzantine faults were studied in~\cite{dolev2008constant,pelc2005broadcasting} in the context of broadcast and consensus problems. Unlike its global classical counterpart, the $f$-local Byzantine adversary can control at most $f$ neighbors of each vertex. This more restricted adversary gives rise to more scalable solutions, as the problems can be solved in networks of degree $O(f)$; without this restriction, degrees need to be proportional to the \emph{total} number of faults in the network.

We also limit our adversary in its selection of Byzantine nodes, by requiring that the faulty nodes are chosen independently at random. As illustrated, e.g., by Lemma~\ref{lemma:sim_byz} and Theorem~\ref{thm:strong_byz}, there is a close connection between the two settings. Informally, we show that certain values of $p$ correspond, asymptotically almost surely (a.a.s), to an $f$-local Byzantine adversary. However, we diverge from the approach in~\cite{dolev2008constant,pelc2005broadcasting} in that we require a fully time-preserving simulation of a fault-free routing schedule, as opposed to solving the routing task in the reinforced network from scratch.

\paragraph*{Fault-Tolerant Logical Network Structures}
Our work is reminiscent of literature on 
the design fault-tolerant network structures. 
In this area (see~\cite{Parter16} for a survey), the goal is to compute a sub-network that has a predefined property, e.g., containing minimum spanning tree. More specifically, the sub-network should sustain adversarial omission faults without losing the property. Hence, the sub-network is usually augmented (with edges) from the input network in comparison to its corresponding non-fault-tolerant counterpart. Naturally, an additional goal is to compute a small such sub-network. In contrast, we design a network that is reinforced (or augmented) by additional edges and nodes so that a given routing scheme can be simulated while facing randomized Byzantine faults. As we ask for being able to ``reproduce'' an arbitrary routing scheme (in the sense of a simulation relation), we cannot rely on a sub-network.

The literature also considered random fault models.
In the network reliability problem, the goal is to compute the probability that the (connected) input network becomes disconnected under random independent edge failures. The reliability of a network is the probability that the network remains connected after this random process.
Karger~\cite{karger2001randomized} gave a fully polynomial randomized approximation scheme for the network reliability problem.
Chechik et.~al~\cite{chechik2012sparse} studied a variant of the task, in which the goal is to compute a sparse sub-network that approximates the reliability of the input network.
We, on the other hand, construct a reinforced network that increases the reliability of the input network;
note also that our requirements are much stricter than merely preserving connectivity.

\paragraph*{Self-healing}
In the context of self-healing routing (e.g., Casta\~{n}eda et~al.~\cite{Castaneda2016}), researchers have studied a model where an adversary removes nodes in an online fashion, one node in each time step (at most $n$ such steps). In turn, the distributed algorithm adds links and sends at most $O(\Delta)$ additional messages to overcome the inflicted omission fault. 
Ideally, the algorithm is ``compact'': each node's storage is limited to $o(n)$ bits. 
A nice property of the algorithm in~\cite{Castaneda2016} is that the degrees are increased by at most $3$. For our purposes, an issue is that the diameter is increased by a logarithmic factor of the maximum initial degree, and hence the same holds for the latency of the routing scheme. Instead, we design a network that is ``oblivious'' to faults in the sense that the network is ``ready'' for independent random faults up to a certain probability, without the need to reroute messages or any other reconfiguration. Moreover, our reinforcements tolerate Byzantine faults and work for arbitrary routing schemes. We remark that compact self-healing routing schemes also deal with the update time of the local data structures following the deletion of a node; no such update is required in our approach.

\paragraph*{Robust Peer-to-Peer Systems}
Peer-to-peer systems are often particularly dynamic and the development
of robust algorithms hence crucial. 
Kuhn et.~al~\cite{kuhn2010towards} study faults in peer-to-peer systems in which an adversary adds and removes nodes from the network within a short period of time (this process is also called churn). In this setting, the goal is to maintain functionality of the network in spite of this adversarial process. Kuhn et~al.~\cite{kuhn2010towards} considered hypercube and pancake topologies, with a powerful adversary that cannot be ``fooled'' by randomness. However, it is limited to at most $O(\Delta)$ nodes, where $\Delta$ is the (maximum) node degree, which it can add or remove within any constant amount of time. The main idea in~\cite{kuhn2010towards} is to maintain a balanced partition of the nodes, where each part plays the role of a supernode in the network topology. This is done by rebalancing the nodes after several adversarial acts, and increasing the dimensionality of the hypercube in case the parts become too big.

In \S~\ref{sec:eff}, we also study hypercube networks. We employ two partitioning techniques to make sure that: (1)~the size of each part is constant and (2)~the number of links in the cut between the parts is at most $\eps\cdot n$, where $n$ is the number of nodes. These partitioning techniques help us to dial down the overheads within each part, and to avoid a failure of each part due to its small size. However, we note that our motivation for considering these topologies is that they are used as communication topologies, for which we can provide good reinforcements, rather than choosing them to exploit their structure for constructing efficient and/or reliable routing schemes (which is of course one, but not the only reason for them being used in practice).

\section{Conclusion}\label{sec:conc}

In this paper, we proposed simple replication strategies for improving network reliability. Despite being simple and general, both in terms of their application and analysis, our strategies can substantially reduce the required reliability on the component level to maintain network functionality.
The presented transformations allow us to directly reuse non-fault-tolerant routing schemes as a blackbox, 
and hence avoid the need to refactor working solutions.
Hence being prepared for non-benign faults can be simple, affordable, and practical, and therefore enables building larger reliable networks. Interestingly, while our basic schemes may hardly surprise, we are not aware of any work systematically exploring and analyzing this perspective. 

We understand our work as a first step and believe that it opens
several interesting avenues for future research.
For example: 
\begin{itemize}[(i)]
  \item Which network topologies allow for good partitions as utilized in \S~\ref{sec:eff}? Small constants here result in highly efficient reinforcement schemes, which are key to practical solutions.
  \item Is it possible to guarantee strong simulations at smaller overheads?
  \item Can constructions akin to the one given in \S~\ref{sec:eff} be applied to a larger class of graphs?
\end{itemize}

On the practical side, while 
our simulations indicate that our approach
can be significantly more efficient than state-of-the-art approaches to provision
dependable ISP networks,
it will be interesting to extend these empirical studies and also consider 
practical aspects such as the incremental deployment 
in specific networks. 

\noindent \textbf{Acknowledgments.}
This project has received funding from the European Research Council (ERC) under the European Union's Horizon 2020 research and innovation programme (grant agreement 716562)
and from the Vienna Science and Technology Fund (WWTF), under grant number ICT19-045 (project WHATIF). 

\bibliographystyle{spmpsci}
\bibliography{robust}

\newpage

\begin{IEEEbiography}[{\includegraphics[width=1in,height=1.25in,clip,keepaspectratio]{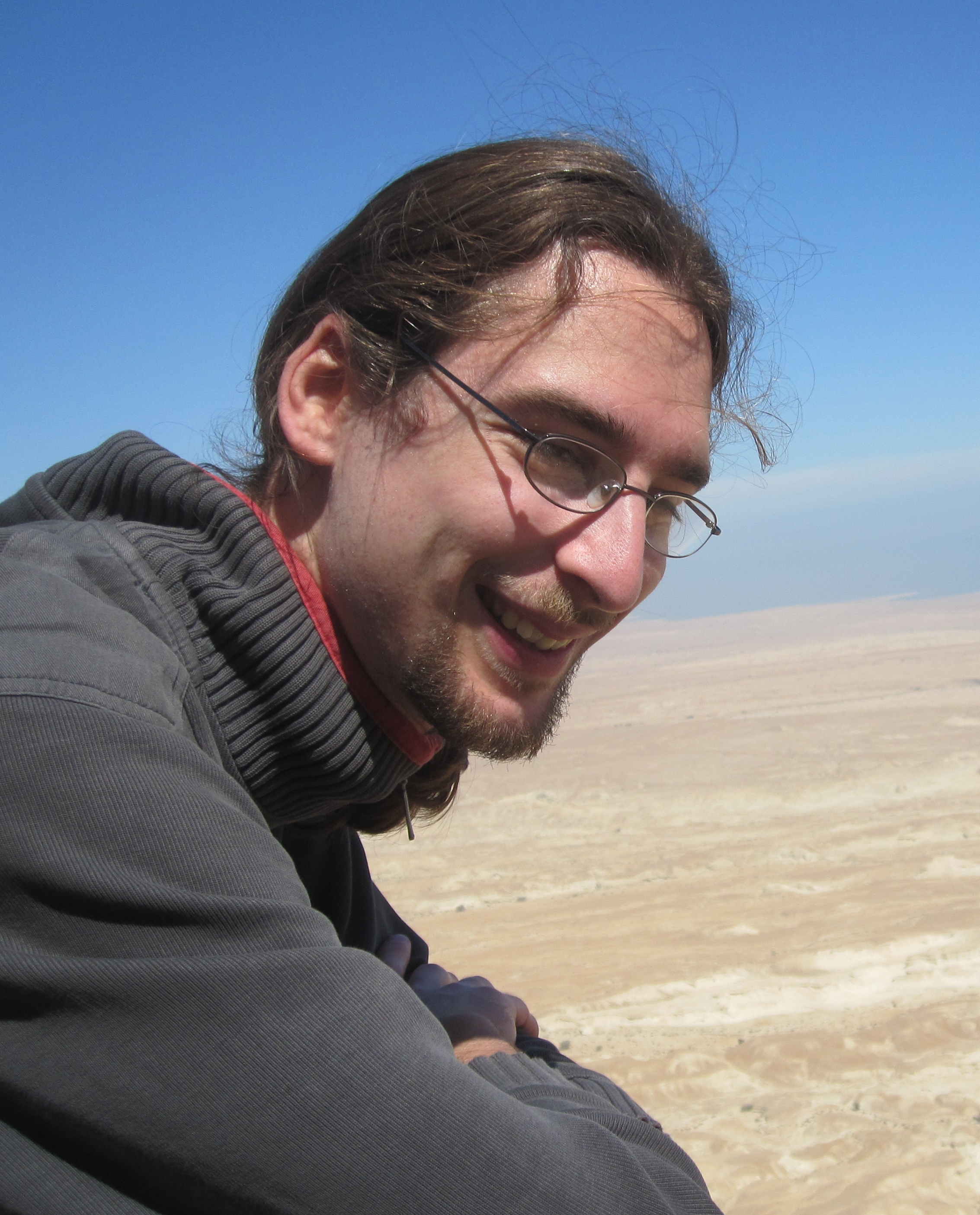}}]{Christoph Lenzen}
received a diploma degree in mathematics from the University of Bonn in 2007 and a
Ph.\,D.\ degree from ETH Zurich in 2011. After postdoc positions at the Hebrew University of Jerusalem,
the Weizmann Institute of Science, and MIT, he became group leader at MPI for Informatics in 2014.
He received the best paper award at PODC 2009, the ETH medal for his dissertation, and in 2017 an ERC starting grant.
\end{IEEEbiography}

\begin{IEEEbiography}[{\includegraphics[width=1in,height=1.25in,clip,keepaspectratio]{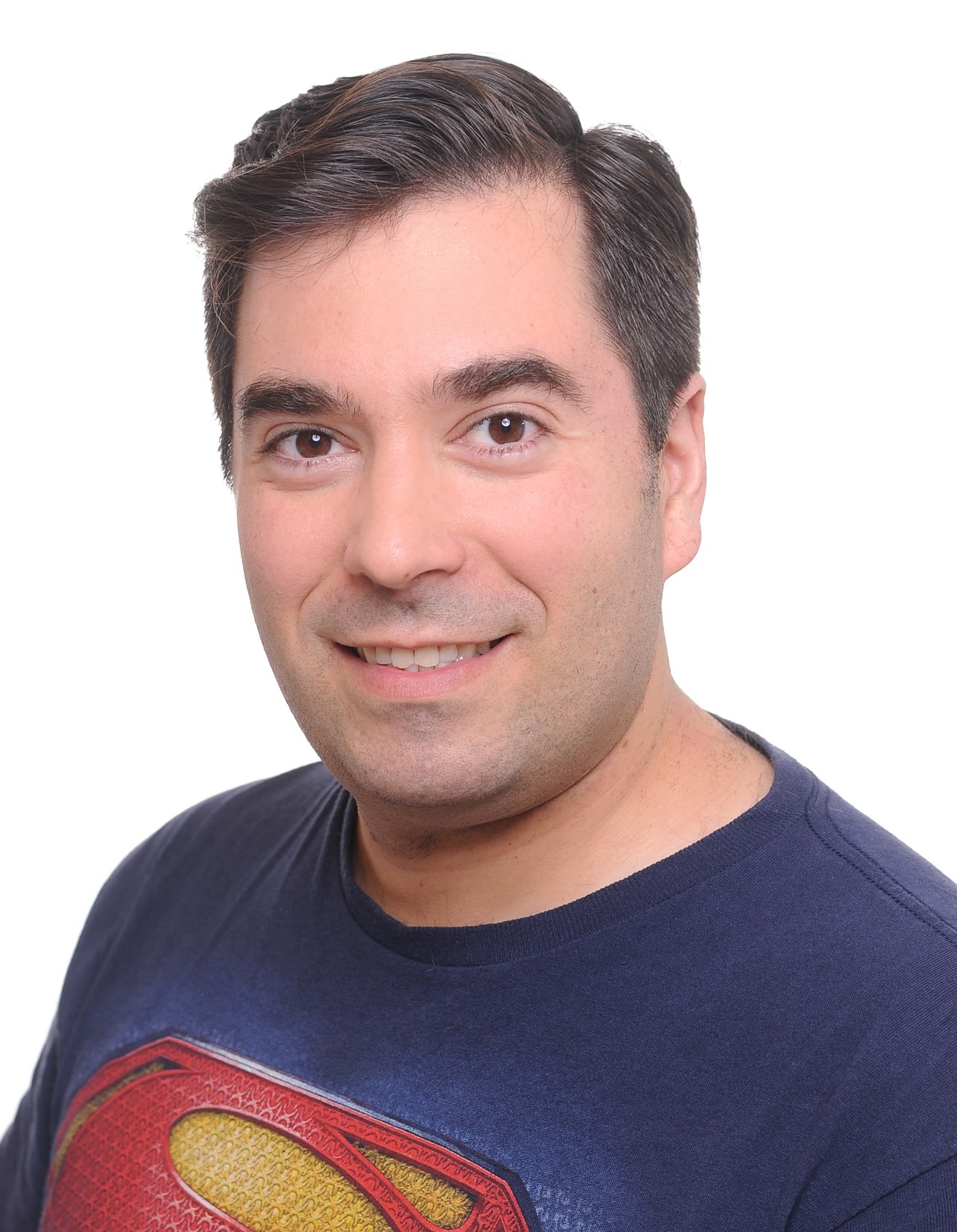}}]{Moti Medina}
is a faculty member at
the Ben-Gurion University of the Negev since 2017. Previously, he was a post-doc
researcher in MPI for Informatics and in the Algorithms and Complexity group at
LIAFA (Paris 7). He graduated his Ph.\,D., M.\,Sc., and B.\,Sc.\ studies at the
School of Electrical Engineering at Tel-Aviv University, in  2014, 2009, and 2007
respectively. Moti is also a co-author of a  text-book on logic design
``Digital Logic Design: A Rigorous Approach'', Cambridge Univ. Press, Oct.
2012.
\end{IEEEbiography}

\begin{IEEEbiography}[{\includegraphics[width=1in,height=1.25in,clip,keepaspectratio]{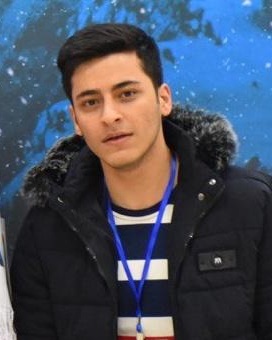}}]{Mehrdad Saberi}
	is an undergraduate student in Computer Engineering at Sharif University of Technology, Tehran, Iran. He achieved a silver medal in International Olympiad in Informatics (2018, Japan) during high school and is currently interested in studying and doing research in Theoretical Computer Science. 
\end{IEEEbiography}

\begin{IEEEbiography}[{\includegraphics[width=1in,height=1.25in,clip,keepaspectratio]{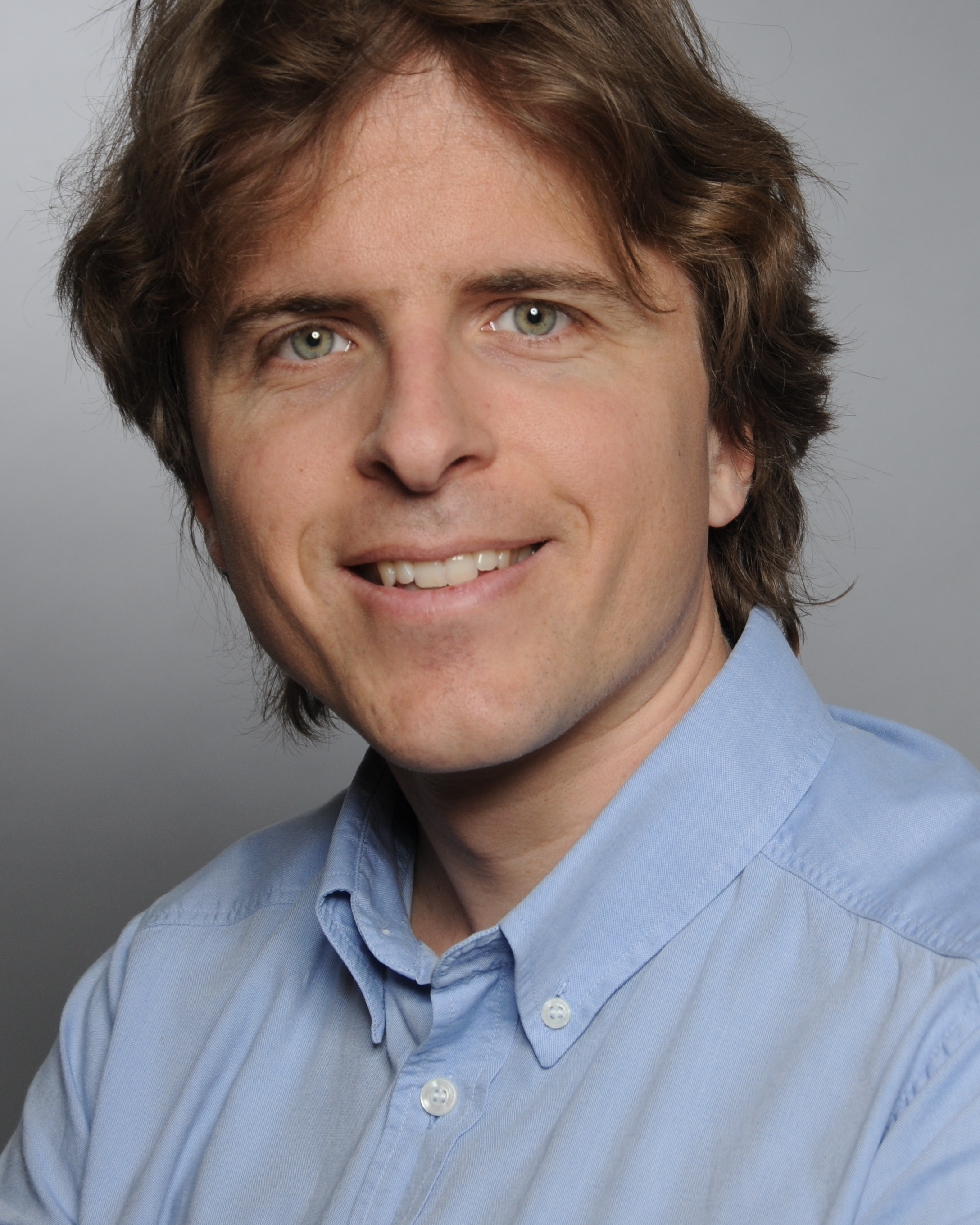}}]{Stefan Schmid}
is a Professor at the University of Vienna, Austria. 
He received his MSc (2004) and PhD
(2008) from ETH Zurich, Switzerland. Subsequently, Stefan Schmid 
worked as postdoc at TU Munich and the University of Paderborn (2009). 
From 2009 to 2015, he was a senior research scientist at the Telekom Innovations Laboratories (T-Labs) in Berlin, Germany, and from 2015 to 2018 an Associate
Professor at Aalborg University, Denmark. 
His research interests revolve around algorithmic problems of networked and distributed systems,
currently with a focus on self-adjusting networks
(related to his ERC project AdjustNet) and resilient networks (related to his WWTF project
WhatIf).
\end{IEEEbiography}

\end{document}